\documentclass[lettersize,journal]{IEEEtran}
\usepackage{amsmath,amsfonts}
\usepackage{array}
\usepackage[caption=false,font=normalsize,labelfont=sf,textfont=sf]{subfig}
\usepackage{textcomp}
\usepackage{stfloats}
\usepackage{url}
\usepackage{verbatim}
\usepackage{graphicx}
\usepackage{cite}
\usepackage{orcidlink}
\usepackage{enumitem}
\usepackage{algorithm}
\usepackage{algpseudocode}  
\usepackage{caption}
\usepackage{subfig}
\usepackage{amssymb}    
\usepackage{amsthm}     
\usepackage{hyperref}
\usepackage[T1]{fontenc}

\algrenewcommand\algorithmicrequire{\textbf{Input:}}
\algrenewcommand\algorithmicensure{\textbf{Output:}}

\def\BibTeX{{\rm B\kern-.05em{\sc i\kern-.025em b}\kern-.08em
    T\kern-.1667em\lower.7ex\hbox{E}\kern-.125emX}}
\usepackage{balance}
\makeatletter
\newcommand*{\rom}[1]{\expandafter\@slowromancap\romannumeral #1@}
\makeatother
\providecommand{\triangleq}{\stackrel{\triangle}{=}}
\newtheorem{theorem}{Theorem}
\newtheorem{lemma}{Lemma}
\newtheorem{proposition}{Proposition}

\theoremstyle{remark}

\begin{document}
\title{Distributed Physical Layer Authentication and Collaborative RSMA in Non-Terrestrial Networks via Graph Reinforcement Learning}
\author{Parsa~Rajabi~\orcidlink{0009-0005-4645-9444},
		Mohammad~Mirzaee~\orcidlink{0009-0007-6712-2896},
		Mohammad~Reza~Abedi~\orcidlink{0000-0003-4114-1339}~\IEEEmembership{Student~Member,~IEEE},
		Nader~Mokari~\orcidlink{0000-0001-5364-8888}~\IEEEmembership{Senior~Member,~IEEE},
		Paeiz~Azmi~\orcidlink{0000-0001-9736-3462}~\IEEEmembership{Senior~Member,~IEEE}

\thanks{This work has been submitted to the IEEE for possible publication.
	Copyright may be transferred without notice, after which this version may
	no longer be accessible.}

\thanks{P. Rajabi, M. Mirzaee, MR. Abedi, N. Mokari, and P. Azmi are with the Department of Electrical and Computer Engineering, Tarbiat Modares University, Tehran, 14115-111, Iran (e-mail: parsa\_rajabi@modares.ac.ir; mirzaee.m@modares.ac.ir; Mohammadreza\_abedi@modares.ac.ir; nader.mokari@modares.ac.ir; pazmi@modares.ac.ir).}
}

\maketitle

\begin{abstract}
	Existing physical-layer authentication (PLA) schemes for non-terrestrial networks (NTNs) often rely on single-anchor verification, lack joint authentication-transmission design, and ignore tag privacy leakage under eavesdropping.
	In this paper, we consider passive, location-aware, static eavesdroppers without access to legitimate channel state information (CSI). Under this threat model, we propose secure adaptive federated authentication for multi-zone NTN systems (SAFA-MZ) that maximizes secrecy spectral efficiency (SSE) while ensuring authentication reliability, power limits, and coverage constraints.
	The main idea is to embed group-level authentication tags into a collaborative multi-layer rate-splitting multiple access (RSMA) transmission structure. Private and common signals are jointly beamformed, artificial noise (AN) is used to reduce information leakage, and group differential privacy (GDP) protects tag information against inference attacks. In addition, users are grouped by semantic priority to allocate SSE based on information importance.
	We formulate a joint SSE maximization problem under authentication reliability and probabilistic secrecy constraints, optimizing high-altitude platform station (HAPS) placement, user association, and RSMA power allocation. The resulting problem is solved using a repair-based cross-entropy method (RCEM) and a graph-aware advantage actor-critic algorithm (GA2C).
	RCEM scales quadratically with the number of users, while GA2C scales linearly and achieves scalable, low-latency inference.
	Simulation results under both colluding and non-colluding eavesdroppers show that the proposed method improves average SSE by up to 135\% over single-connect transmission and 21\% over the scheme without AN. These results confirm SAFA-MZ offers a scalable and secure solution for dynamic NTN environments.
\end{abstract}

\begin{IEEEkeywords}
	Physical Layer Authentication, Non-Terrestrial Network, Group Differential Privacy, Rate-Splitting Multiple Access, Graph Reinforcement Learning.
\end{IEEEkeywords}

\section{Introduction}

\IEEEPARstart{N}{on-terrestrial} networks (NTNs) integrate satellites and aerial platforms such as high-altitude platform stations (HAPSs) and unmanned aerial vehicles (UAVs) with terrestrial infrastructure, but long links, mobility, Doppler, and time-varying topologies pose security challenges \cite{ref2}.
In multi-HAPS NTNs, physical-layer authentication (PLA) is critical. PLA verifies transmitters via physical-layer characteristics and is either passive (using inherent fingerprints) or active (embedding explicit tags) \cite{ref8}.
Conventional cryptographic authentication add overhead and latency and does not address physical-layer attacks such as spoofing and pilot contamination \cite{ref85}. In contrast, PLA exploits physical randomness or embedded tags \cite{ref54}.
Passive PLA often uses geometry or channel features with statistical tests or machine learning (ML) fusion; e.g., \cite{ref42} proposes a CSI-based statistic for better separability at low SNRs.
Fingerprint has been explored in millimeter-wave MIMO via beam patterns \cite{ref43}, and covert authentication during beam alignment has been studied \cite{ref44}.
For active PLA, tag embedding and hypothesis testing are common, including challenge--response (CR) designs \cite{ref47}, message-aware and cancellation-based receivers \cite{ref29,ref48}, and fading-aware tag power control \cite{ref49}.
Authentication assisted by reconfigurable intelligent surface (RIS) has been investigated via tag construction and CR \cite{ref50,ref30}. In \cite{ref82}, a CR scheme was proposed where a receiving base station randomly configures RIS to verify the legitimacy of the transmitted message. Phase--tag is another option, where a key-dependent component is superimposed for hypothesis-based detection \cite{ref76}.

Recent surveys summarize progress in learning-based and tag-based PLA under time-varying channels \cite{ref53, ref52}.
Generative AI (GAI) has been utilized to enhance adaptive PLA in dynamic environments \cite{ref81}. 
The study in \cite{ref84} reviewed various GAI models for physical layer security applications.
In \cite{ref83}, GAI integration with PLA was presented. PLA has also been integrated with reinforcement learning (RL) for trust evaluation \cite{ref51}. 
In \cite{ref85}, a deep learning (DL) method was proposed for millimeter wave communications against pilot contamination.
Model-driven learning reduces mismatch while preserving detector structure \cite{ref46}. DL-based joint authentication and localization is proposed in \cite{ref45}, and PLA in satellite channels via DL is investigated in \cite{ref26}.
We focus on active PLA due to its controllability and higher reliability in NTNs, where line-of-sight (LoS) dominance and mobility degrade passive fingerprints. Beyond typical Rician fading in NTN \cite{ref71}, we incorporate probabilistic LoS realization, elevation dependence, and atmospheric attenuation to capture adversarial link conditions.

Distributed PLA (DPLA) improve robustness via multi-anchor fusion, but existing designs typically do not address collaborative transmission in multi-transmitter networks \cite{ref74}.
Multi-HAPS NTNs with multi-connectivity require distributed multi-point authentication. Rate-splitting multiple access (RSMA) divides messages into common and private parts \cite{ref41}. Unlike conventional RSMA, our scheme jointly optimizes private, common, and artificial noise (AN) power splitting under group authentication and secrecy constraints. Unlike relay-based cooperative RSMA \cite{ref72}, our collaborative RSMA uses parallel multi-HAPS transmission with joint beamforming. We use a multi-layer structure where common streams carry group authentication tags and private streams carry user data. Despite multi-layer RSMA \cite{ref41}, its integration with distributed authentication is underexplored.

A key challenge is protecting shared authentication tags. Differential privacy (DP) noise creates a privacy–reliability trade-off \cite{ref86}. Local DP is explored in \cite{ref78} for data analytics and verifiable mechanisms to mitigate output poisoning attacks in aggregation. 
DP quantifies privacy leakage under data changes and naturally extends to group-level protection.
Group DP (GDP) is introduced in \cite{ref20}, and the effect of changing multiple records jointly is studied in \cite{ref66}. GDP provides subgroup protection \cite{ref63}. Perturbation timing and noise mechanism affect utility; later-stage and analytic Gaussian mechanisms improve accuracy \cite{ref22, ref66}. Related developments include DP for large language models (LLM) \cite{ref86}, adaptive privacy-preserving DL \cite{ref64}, generative adversarial networks (GANs) \cite{ref65}, and wireless systems \cite{ref21}.
Unlike cryptographic data-level privacy, we use the physical layer and the Gaussian mechanism for common-tag generation against eavesdroppers.
We consider passive external eavesdroppers without signal injection, access to legitimate CSI, or beamforming vectors, attempting tag inference under colluding and non-colluding settings.
To further enhance secrecy, we also employ AN. 

Most existing PLA works are point-to-point and do not integrate multi-HAPS and scalable graph-based framework.
Graph-based learning has been applied to distributed throughput optimization in cell-free massive MIMO \cite{ref68}, subgraph-based method for anomaly detection \cite{ref80}, satellite-assisted networks \cite{ref69}, and simultaneously transmitting and reflecting-RIS (STAR-RIS) \cite{ref70}. Graph neural networks (GNNs) enable distributed resource allocation in cell-free massive multiple-input multiple-output (MIMO) systems with reduced overhead \cite{ref68}, fairness-aware satellites \cite{ref69}, and heterogeneous-graph modeling for STAR-RIS \cite{ref70}. 
Leveraging the natural graph structure of NTNs, GNN message passing offers scalability; nevertheless, information leakage from distributed training in dynamic graphs necessitates edge-level DP and resilience against subgraph inference attacks \cite{ref67}. 
Moreover, RL is well-suited to sequential decisions under evolving topology, and graph policies improve scalability \cite{ref33,ref37,ref38}.

However, these studies overlook authentication, privacy, and adversarial robustness. Therefore, joint transmission and authentication in multi-HAPS NTNs remains underexplored. 
In this paper, secure adaptive federated authentication for multi-zone (SAFA-MZ) addresses these gaps by integrating distributed tag-based PLA, multi-layer RSMA, GDP-protected tags, AN-assisted secrecy, and graph-based RL in a unified secrecy- and fairness-aware NTN framework.
Our design jointly optimizes HAPS placement, user association, and power allocation leading to a nonconvex problem.
The resulting optimization problem is to maximize spectral secrecy efficiency (SSE) subject to authentication reliability, coverage, and probabilistic secrecy constraints in a multi-HAPS NTN with eavesdroppers.
To solve the problem, we propose a repair-based cross-entropy method (RCEM) as a derivative-free baseline for mixed-integer constrained optimization, and a method based on advantage actor–critic (A2C), called graph-aware A2C (GA2C), which combines A2C with a graph autoencoder (GAE) to encode the multi-HAPS topology into permutation-invariant features for topology-aware learning and scalable inference.
The main contributions are summarized as follows:

\begin{itemize}
	\item We develop a PLA framework with multiple HAPSs and a group-based scheme where users share a common tag while keeping private streams, supported by collaborative multi-layer RSMA. The collaboration means signals are transmitted from multiple HAPSs to improve robustness, coverage, and SSE. The multi-layer part allows transmitting both group-common and private signals at the same time. Semantic grouping further allocates SSE based on information importance.
	
	\item We apply GDP with the Gaussian mechanism to group authentication tags to protect tag information even under eavesdropping. In addition, AN is injected into the null space of legitimate users to reduce information leakage.
	
	\item We study both colluding and non-colluding eavesdropper models and formulate a joint optimization problem that considers secrecy, fairness, HAPS placement, user association, power allocation, and authentication constraints.
	
	\item We develop RCEM as an evolutionary algorithm and GA2C for scalable learning. Graph modeling enables node addition or removal without network redesign.
\end{itemize}

The remainder of this paper is organized as follows.
Section~\ref{sec:system_model} describes the system model. 
Section~\ref{sec:problem} presents the problem formulation. 
Section~\ref{sec:solution} introduces the proposed SAFA-MZ framework and the corresponding solution algorithms. 
Section~\ref{sec:complexity} analyzes the computational complexity and scalability of the proposed methods. 
Section~\ref{sec:results} provides numerical results and performance evaluations. 
Finally, Section~\ref{sec:conclusion} concludes the paper.

\subsection*{Mathematical Notation}

Bold lower- and upper-case letters denote vectors and matrices, respectively. $(\cdot)^{\mathsf{T}}$ and $(\cdot)^{\mathsf{H}}$ denote transpose and Hermitian transpose; $(\cdot)^{-1}$ and $(\cdot)^{\dagger}$ denote inverse and Moore--Penrose pseudoinverse. $\|\cdot\|_2$ is the Euclidean norm, and $|\cdot|$ absolute value. $\mathbf{I}_n$ is the $n\times n$ identity matrix. $\mathbb{E}[\cdot]$, $\mathbb{I}\{\cdot\}$, and $\odot$ are expectation, indicator, and Hadamard product. $\mathcal{CN}(\boldsymbol{\mu},\mathbf{\Sigma})$ denotes a circularly symmetric complex Gaussian distribution with mean $\boldsymbol{\mu}$ and covariance $\mathbf{\Sigma}$. $\mathbb{R}$, $\mathbb{C}$, and $\mathbb{R}_+$ denote reals, complex numbers, and non-negative reals. $[x]^+ \triangleq \max(0,x)$ ensures non-negativity. $\triangleq$ means definition, and $\log_2(\cdot)$ is base-$2$ logarithm. For matrix $\mathbf{A}$, $[\mathbf{A}]_{i,j}$ is its $(i,j)$-th entry, with $\mathbf{A}(i,:)$ and $\mathbf{A}(:,j)$ as its $i$-th row and $j$-th column. $\mathcal{O}(\cdot)$ denotes asymptotic upper bound, and $\Theta(\cdot)$ a tight bound. Finally, $|\mathcal{A}|$ is the cardinality of set $\mathcal{A}$.

\section{System Model}
\label{sec:system_model}

As illustrated in Fig.~\ref{fig:system-model}, we consider NTN comprising a set of $\mathcal{K}$ HAPSs, a set of $\mathcal{U}$ single-antenna ground users, and a set of $\mathcal{E}$ passive eavesdroppers. The corresponding index sets are defined as $\mathcal{K} = \{ k_1, k_2, \ldots, k_{|\mathcal{K}|} \}$, $\mathcal{U} = \{ u_1, u_2, \ldots, u_{|\mathcal{U}|} \}$, and $\mathcal{E} = \{ e_1, e_2, \ldots, e_{|\mathcal{E}|} \}$.
Each HAPS $k$ transmits a superimposed signal consisting of private, common, and AN components. Let $\mathbf{H}_k \in \mathbb{C}^{|\mathcal{U}_k| \times M}$ denote the aggregated channel matrix from HAPS $k$ to its associated users, where $M$ is the number of transmit antennas per HAPS, $|\mathcal{U}_k|$ is the number of users associated with HAPS $k$. Hence, $\mathbf{H}_k = [\mathbf{h}_{k,u_1}, \mathbf{h}_{k,u_2}, \dots, \mathbf{h}_{k,u_{|\mathcal{U}_k|}}]^T$, where
$\mathbf{h}_{k,u} \in \mathbb{C}^{M \times 1}$. The channel vector between HAPS $k$ and user $u$ is denoted by $\mathbf{h}_{k,u}$.
Furthermore, the user set $\mathcal{U}$ is partitioned into $|\mathcal{G}|$ disjoint groups based on semantic tags to allocate SSE according to information importance, where $\mathcal{U}_{g} \subseteq \mathcal{U}$ represents the set of users in group $g$.
Groups are denoted by $\mathcal{G} = {g_1,\dots,g_{|\mathcal{G}|}}$, with $N_g \triangleq |\mathcal{U}_g|$ users in group $g$, and $\mathcal{G}_k$ is the set of groups served by HAPS $k$. These tags facilitate resource allocation, clustering, and multi-layer transmission in NTNs. Eavesdropper locations are assumed known (e.g., via spoofing signals) to estimate success probability, though the system remains functional without this assumption.

\begin{figure}[t]
	\centering
	\includegraphics[width=\columnwidth]{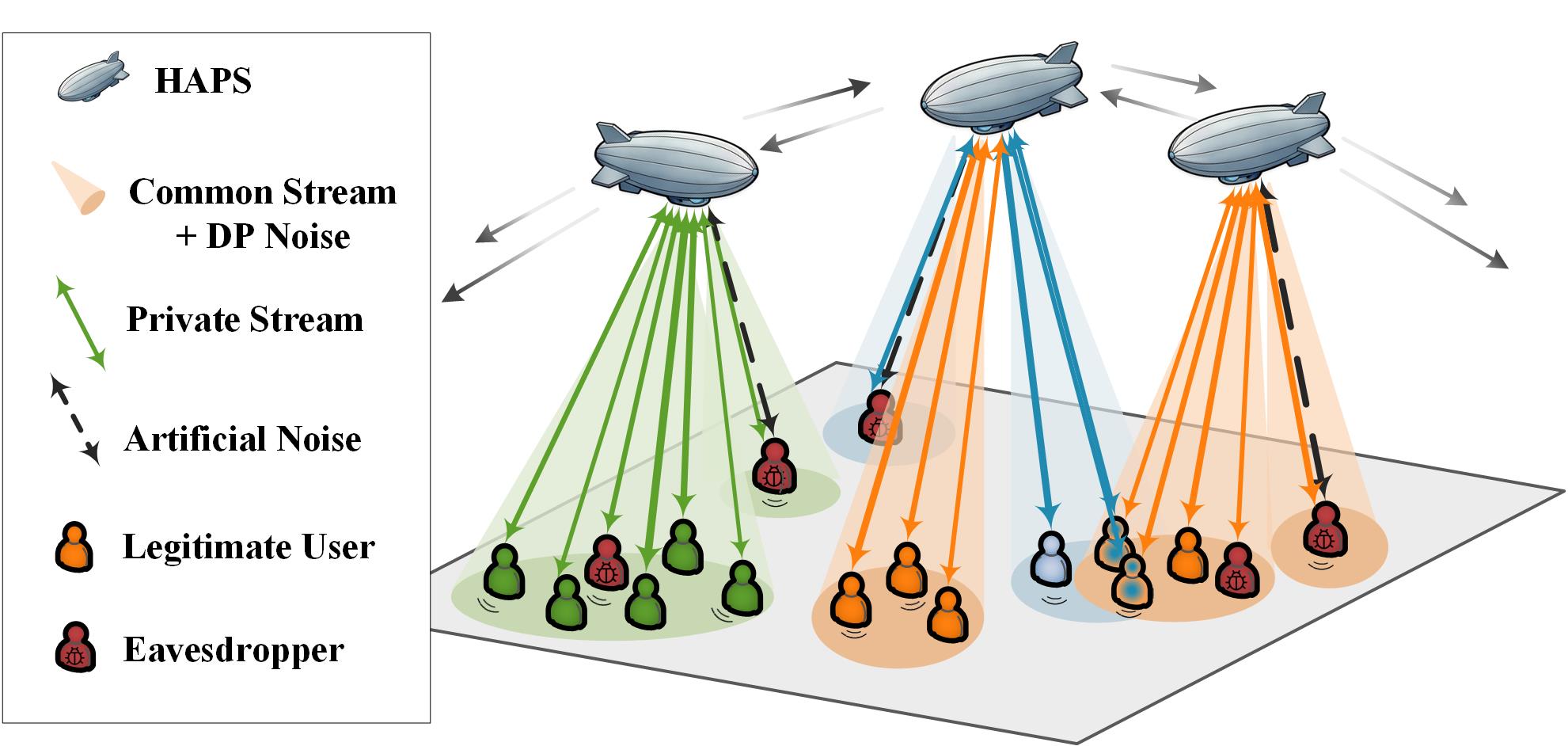}
	\caption{System model.}
	\label{fig:system-model}
\end{figure}

\subsection{Geometry and Placement}
\label{sec:precoders}
The regularized zero-forcing (RZF) precoder is given by
\begin{equation}
	\tilde{\mathbf{P}}_k 
	= \mathbf{H}_k^\mathrm{H}\!\left(\mathbf{H}_k \mathbf{H}_k^\mathrm{H} 
	+ \lambda_k \mathbf{I}_{|\mathcal{U}_k|}\right)^{-1},
	\quad \tilde{\mathbf{P}}_k \in \mathbb{C}^{M \times |\mathcal{U}_k|},
\end{equation}
where $\lambda_k \in \mathbb{R}_+$ is the regularization parameter and 
$\mathbf{I}_{|\mathcal{U}_k|}$ is the $|\mathcal{U}_k|\times|\mathcal{U}_k|$ identity matrix.
The normalized private beamformer associated with user $u \in \mathcal{U}_k$ is obtained as the corresponding column of $\tilde{\mathbf{P}}_k$:

\begin{equation}
	\mathbf{p}_{k,u} = 
	\frac{\tilde{\mathbf{P}}_k(:,u)}{\|\tilde{\mathbf{P}}_k(:,u)\|},
	\quad \mathbf{p}_{k,u} \in \mathbb{C}^{M \times 1},
\end{equation}
where $\tilde{\mathbf{P}}_k(:,u)$ denotes the $u$-th column of $\tilde{\mathbf{P}}_k$.
Let $\mathcal{U}_{k,g} \triangleq \mathcal{U}_k \cap \mathcal{U}_g$ denote the set of users in group $g$ associated with HAPS $k$. 
The common beamformer at HAPS $k$ for group $g$ is designed via maximum ratio transmission (MRT) as
\begin{equation}
	\mathbf{w}_{\mathrm{c},k,g} = 
	\frac{\sum_{u \in \mathcal{U}_{k,g}} \alpha_{k,u}\,\mathbf{h}_{k,u}}
	{\left\|\sum_{u \in \mathcal{U}_{k,g}} \alpha_{k,u}\,\mathbf{h}_{k,u}\right\|},
\end{equation}
where $\alpha_{k,u} \geq 0$ are weighting coefficients.

To enhance confidentiality, each HAPS injects AN into the null space of its served users’ channels.
The null-space projection matrix is defined as
\begin{equation}
	\mathbf{P}_{\perp,k}
	= \mathbf{I}_M - \mathbf{H}_k^{\mathrm H}
	\left(\mathbf{H}_k \mathbf{H}_k^{\mathrm H}\right)^{-1}
	\mathbf{H}_k
	= \mathbf{I}_M - \mathbf{H}_k \mathbf{H}_k^{\dagger},
\end{equation}
where $\mathbf{P}_{\perp,k} \in \mathbb{C}^{M \times M}$ is an idempotent projection matrix and $\mathbf{H}_k^{\dagger}$ denotes the Moore–-Penrose pseudoinverse of $\mathbf{H}_k$.
The AN vector is constructed as
$
	\mathbf{v}_k = \mathbf{P}_{\perp,k}\mathbf{z}_k 
$
, where $\mathbf{v}_k \in \mathbb{C}^{M \times 1}$ and $\mathbf{z}_k \sim \mathcal{CN}(\mathbf{0},\mathbf{I}_M)$ is a standard complex Gaussian random vector of dimension $M \times 1$.

Each HAPS $k$ is located at
$\mathbf{q}_k = [q_{k1}, q_{k2}, q_{k3}]^\mathsf{T} \in \mathbb{R}^{3 \times 1}$,
where $q_{k1}, q_{k2}$ denote horizontal coordinates and $q_{k3} > 0$ is the altitude (in meters).
Each ground user $u \in \mathcal{U}$ is located at
$\mathbf{q}_u = [q_{u1}, q_{u2}, 0]^\mathsf{T} \in \mathbb{R}^{3 \times 1}$,
and each eavesdropper $e \in \mathcal{E}$ is positioned at
$\mathbf{q}_e = [q_{e1}, q_{e2}, 0]^\mathsf{T} \in \mathbb{R}^{3 \times 1}$.
The 3D Euclidean distance between HAPS $k$ and user $u$ is
$
	d_{k,u} = \lVert \mathbf{q}_k - \mathbf{q}_u \rVert$, where $d_{k,u} \in \mathbb{R}_+$. The elevation angle is
$
	\theta_{k,u} = \arcsin\!\left(\frac{q_{k3}}{d_{k,u}}\right)$, where $\theta_{k,u} \in [0,\tfrac{\pi}{2}]$.
We adopt a probabilistic LoS model in which the LoS probability increases with the elevation angle.
A generic parametric form is
\begin{equation}
	P_{\mathrm{LoS}}(\theta_{k,u}) =
	\frac{1}{1 + a_{l} \exp \bigl(-b_l(\theta_{k,u}-\theta_0)\bigr)},
	\label{eq:plos}
\end{equation}
where $a_l, b_l \in \mathbb{R}_+$ are environment-dependent constants, and
$\theta_0 \in \mathbb{R}$ is a reference elevation angle. In \eqref{eq:plos}, $P_{\mathrm{LoS}}(\theta_{k,u}) \in [0,1]$.
For each pair $(k,u)$, the LoS event is realized by a Bernoulli trial with parameter $P_{\mathrm{LoS}}(\theta_{k,u})$.
Conditioned on LoS or NLoS, the large-scale attenuation is
\begin{equation}
	\beta_{k,u} =
	C_0 \, d_{k,u}^{-\alpha_s} \, 10^{-S_{k,u}/10} \, L_{\mathrm{atm}}(d_{k,u}),
	\quad \beta_{k,u} \in \mathbb{R}_+,
	\label{eq:beta}
\end{equation}
where $C_0 \in \mathbb{R}_+$ is a frequency-dependent constant,
$\alpha_s \in \mathbb{R}_+$ is the path-loss exponent,
$S_{k,u} \sim \mathcal{N}(0,\sigma_S^2)$ is the log-normal shadowing variable in dB,
and $L_{\mathrm{atm}}(d_{k,u}) \ge 1$ is the atmospheric attenuation factor.
For LoS links, a Rician fading model is considered; otherwise, Rayleigh fading applies.
The $M \times 1$ channel vector from HAPS $k$ to user $u$ is
\begin{equation}
	\mathbf{h}_{k,u} = \small
	\begin{cases}
		\sqrt{\beta_{k,u}}
		\left(
		\sqrt{\dfrac{\kappa_r}{\kappa_r+1}} \, \mathbf{h}^{\mathrm{LoS}}_{k,u}
		+ \sqrt{\dfrac{1}{\kappa_r+1}} \, \mathbf{h}^{\mathrm{NLoS}}_{k,u}
		\right), & \text{LoS}, \\[6pt]
		\sqrt{\beta_{k,u}} \, \mathbf{h}^{\mathrm{NLoS}}_{k,u}, & \text{NLoS}.
	\end{cases}
	\normalsize
	\label{eq:rician}
\end{equation}

In the above, $\mathbf{h}_{k,u}$,
$\mathbf{h}^{\mathrm{NLoS}}_{k,u} \sim \mathcal{CN}(\mathbf{0},\mathbf{I}_M)$,
$\mathbf{h}^{\mathrm{LoS}}_{k,u} \in \mathbb{C}^{M \times 1}$ are the deterministic array steering vectors (functions of azimuth and elevation),
and $\kappa_r \in \mathbb{R}_+$ denotes the Rician $\kappa_r$-factor.

\subsection{Transmit Model and Association}

Each HAPS serves up to $U_{\mathrm{perRB}} \in \mathbb{Z}_+$ users per resource block (RB).
The set of users served by $k$ is derived as
$\mathcal{U}_k = \{ u \in \mathcal{U} \mid a_{u,k} = 1 \}$, where the binary association indicator $a_{u,k}$ equals $1$ if user $u$ is associated with HAPS $k$, and $0$ otherwise. 
Two association modes are considered. 
In the exclusive association mode, $\sum_{k \in \mathcal{K}} a_{u,k} = 1$ for all $u \in \mathcal{U}$, meaning that each user is associated with exactly one HAPS. 
In the multi-connect association mode, $\sum_{k \in \mathcal{K}} a_{u,k} \ge 1$ for all $u \in \mathcal{U}$, allowing a user to be simultaneously associated with multiple HAPSs.
Moreover, the per-HAPS serving capacity is constrained by
$\sum_{u \in \mathcal{U}} a_{u,k} \le U_{\mathrm{perRB}}$.
Thus, the transmit signal from HAPS $k$ is expressed as
\begin{align}
	&\mathbf{x}_k =\\\nonumber &\sqrt{P_{\mathrm{t},k}}(
	\rho_{\mathrm{b},k}\sum_{u\in\mathcal{U}_k}\mathbf{p}_{k,u} b_u
	+ \sum_{g\in\mathcal{G}_k}\rho_{\mathrm{c},k,g}\mathbf{w}_{\mathrm{c},k,g} c_g
	+ \rho_{\mathrm{d},k}\mathbf{v}_k),
\end{align}
where $\mathbf{x}_k \in \mathbb{C}^{M \times 1}$ denotes the transmit signal vector, $P_{\mathrm{t},k} \in \mathbb{R}_+$ is the transmit power, $\mathbf{p}_{k,u} \in \mathbb{C}^{M \times 1}$ is the normalized private precoder for user $u$, and $b_u \in \mathbb{C}$ is the private information symbol of user $u$ with unit average power, i.e., $\mathbb{E}[|b_u|^2]=1$. Furthermore, $\mathbf{w}_{\mathrm{c},k,g} \in \mathbb{C}^{M \times 1}$ is the common precoder, $c_g \in \mathbb{C}$ is the common information symbol of group $g$ with unit average power, i.e., $\mathbb{E}[|c_g|^2]=1$. This ensures that the total transmit power is only controlled by the allocation coefficients. The coefficients $\rho_{\mathrm{b},k}, \rho_{\mathrm{c},k,g}, \rho_{\mathrm{d},k} \in \mathbb{R}_+$ are power-splitting factors. Additionally, $\mathcal{G}_k \subseteq \mathcal{G}$ denotes the set of groups that are served by HAPS $k$.
We use GDP for common symbol $c_g = \tilde{c}_g + n_{\mathrm{DP}}$, where $n_{\mathrm{DP}} \sim \mathcal{CN}(0,\sigma_{\mathrm{DP}}^2)$ and $\tilde{c}_g$ is common symbol without DP. Here, $c_g$ carries both data and authentication tag information; 
the injected DP noise therefore obfuscates the tag component while preserving data utility.
Let $S_c$ denote the $\ell_2$-sensitivity of the mapping from group data to the common symbol $\tilde{c}_g$. 

\begin{lemma}[Gaussian Mechanism]
	\label{lem:gdp_gaussian_app}
	Let the common-symbol mapping be $s_c(D)$ with $\ell_2$-sensitivity
	\begin{equation}
		S_c \triangleq \max_{D\sim D'} \|s_c(D)-s_c(D')\|_2,
	\end{equation}
	where $D\sim D'$ denotes neighboring datasets differing in one element.
	Consider $\tilde{s}_c = s_c(D)+\mathbf{z}_\mathrm{DP}$, where $ \mathbf{z}_\mathrm{DP}\sim\mathcal{N}(0,\sigma_{\mathrm{DP}}^2\mathbf{I})$. If $\sigma_{\mathrm{DP}} \ge \frac{S_c\sqrt{2\ln(1.25/\delta')}}{\epsilon'}$,
	then the mechanism is $(\epsilon',\delta')$-DP.
\end{lemma}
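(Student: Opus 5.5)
The plan is to follow the standard Dwork--Roth argument for the Gaussian mechanism via the privacy loss random variable. Fix neighboring datasets $D\sim D'$ and write $\Delta \triangleq s_c(D)-s_c(D')$, so that $\|\Delta\|_2\le S_c$. The output densities are those of $\mathcal{N}(s_c(D),\sigma_{\mathrm{DP}}^2\mathbf{I})$ and $\mathcal{N}(s_c(D'),\sigma_{\mathrm{DP}}^2\mathbf{I})$. For an output $s_c(D)+\mathbf{z}$, the privacy loss is
\begin{equation}
L(\mathbf{z})=\ln\frac{p_D(s_c(D)+\mathbf{z})}{p_{D'}(s_c(D)+\mathbf{z})}=\frac{\|\mathbf{z}+\Delta\|_2^2-\|\mathbf{z}\|_2^2}{2\sigma_{\mathrm{DP}}^2}=\frac{2\langle\mathbf{z},\Delta\rangle+\|\Delta\|_2^2}{2\sigma_{\mathrm{DP}}^2}.
\end{equation}
By rotational invariance of the isotropic Gaussian, $\langle\mathbf{z},\Delta\rangle$ is distributed as a scalar $\mathcal{N}(0,\sigma_{\mathrm{DP}}^2\|\Delta\|_2^2)$. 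The problem is therefore one-dimensional, and the worst case is $\|\Delta\|_2=S_c$, since the tail bound below is monotone in $\|\Delta\|_2$. For the complex model $n_{\mathrm{DP}}\sim\mathcal{CN}(0,\sigma_{\mathrm{DP}}^2)$, I would identify $\mathbb{C}$ with $\mathbb{R}^2$ and use per-component variance as in the lemma's real formulation. The same computation goes through, with the variance convention absorbed into $\sigma_{\mathrm{DP}}$.

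The key step is to show that $\Pr[L>\epsilon']\le\delta'$. Once that holds, I would partition any measurable output set $\mathcal{S}$ into the part where $L\le\epsilon'$ and the part where $L>\epsilon'$. On the first part the density ratio is at most $e^{\epsilon'}$, and the second part has probability at most $\delta'$. Together these give
\begin{equation}
\Pr[\tilde{s}_c(D)\in\mathcal{S}]\le e^{\epsilon'}\Pr[\tilde{s}_c(D')\in\mathcal{S}]+\delta'.
\end{equation}
Writing $\langle\mathbf{z},\Delta\rangle=\sigma_{\mathrm{DP}} S_c\, x$ with $x\sim\mathcal{N}(0,1)$, the event $L>\epsilon'$ becomes $x> t\triangleq \epsilon'\sigma_{\mathrm{DP}}/S_c - S_c/(2\sigma_{\mathrm{DP}})$. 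I would then apply the Gaussian tail bound $\Pr[x>t]\le \frac{1}{t\sqrt{2\pi}}e^{-t^2/2}$ for $t>0$.

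The main obstacle is this constant-chasing step: showing that $\sigma_{\mathrm{DP}}=c\,S_c/\epsilon'$ with $c^2\ge 2\ln(1.25/\delta')$ makes the tail at most $\delta'$. Substituting gives $t=c-\epsilon'/(2c)$. The requirement $\frac{1}{t\sqrt{2\pi}}e^{-t^2/2}\le\delta'$ is implied by two conditions: $t\ge 1/\sqrt{2\pi}$, so that the prefactor is at most one, and $t^2/2\ge \ln(1/\delta')+\ln(\cdot)$ slack terms. Expanding $t^2=c^2-\epsilon'+\epsilon'^2/(4c^2)$, it suffices that $c^2-\epsilon' \ge 2\ln(1/\delta')+2\ln(\sqrt{2/\pi})$ roughly. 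This follows from $c^2\ge2\ln(1.25/\delta')$ provided $\epsilon'\le 1$, because $2\ln 1.25\approx0.446$ covers both the $\epsilon'$ loss and the prefactor only in that regime. I would therefore make explicit, as in Dwork--Roth Theorem A.1, that the bound is valid for $\epsilon'\in(0,1)$, which is the standard implicit hypothesis of the lemma. If needed I would add this as a remark, and point to the analytic Gaussian mechanism \cite{ref22} for larger $\epsilon'$.

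Finally, I would note that the group version used later follows from this single-element guarantee. For a group of size $N_g$, one either replaces $S_c$ by the group sensitivity or applies the standard group-privacy composition. That extension is outside the lemma itself, so I would only remark on it.
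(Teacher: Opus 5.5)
Your argument is the paper's own argument carried out in full: the paper also writes the privacy loss as an affine function of a scalar Gaussian and invokes a Gaussian tail bound, but it never does the constant-chasing, so it silently drops the hypothesis $\epsilon'\in(0,1)$ that you correctly make explicit. That hypothesis is genuinely necessary, not an artifact of the proof: by the exact privacy profile of the Gaussian mechanism (Balle--Wang), $\delta'=10^{-5}$, $\epsilon'=10$ and $\sigma_{\mathrm{DP}}$ at the stated threshold give an actual $\delta\approx 2.3\times10^{-5}>\delta'$, which matters here because the paper's simulations use $\epsilon=50$, so $\epsilon'=\epsilon/N_g$ is well above $1$.
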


\begin{proposition}[GDP Scaling \cite{ref66}]
	\label{prop:gdp_group_app}
	If a mechanism is $(\epsilon',\delta')$-DP for a single group change, then for datasets differing in at most $N_g$ elements it satisfies $(\epsilon,\delta)=(N_g\epsilon',\,N_g\delta')$.
\end{proposition}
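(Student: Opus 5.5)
The plan is to avoid the generic hybrid (chain) argument as the main route, because it does not directly give the stated $\delta$. Changing one element at a time along $D=D_0\sim D_1\sim\cdots\sim D_{N_g}=D'$ and applying the single-change guarantee $N_g$ times gives
\begin{equation}
\Pr[\mathcal{M}(D)\in S]\le e^{N_g\epsilon'}\Pr[\mathcal{M}(D')\in S]+\sum_{i=0}^{N_g-1}e^{i\epsilon'}\delta' .
\end{equation}
The $\epsilon$-part is exactly $N_g\epsilon'$. The additive term, however, is $\sum_{i}e^{i\epsilon'}\delta'$, which is at least $N_g\delta'$, with equality only when $\epsilon'=0$. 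So the pure chain argument proves the proposition only with $N_g\delta'$ replaced by this larger sum. I will therefore treat the chain bound as a fallback and prove the stated form by exploiting the specific mechanism used in the paper.

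\textbf{Step 1 (bound the group sensitivity).} By the triangle inequality along the same chain,
\begin{equation}
\|s_c(D)-s_c(D')\|_2\le\sum_{i=1}^{N_g}\|s_c(D_{i-1})-s_c(D_i)\|_2\le N_g S_c .
\end{equation}
Hence the group-level $\ell_2$-sensitivity satisfies $S_c^{(g)}\le N_gS_c$.

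\textbf{Step 2 (recalibrate the noise level).} The noise level chosen in Lemma~\ref{lem:gdp_gaussian_app} satisfies
\begin{equation}
\sigma_{\mathrm{DP}}\ge\frac{S_c\sqrt{2\ln(1.25/\delta')}}{\epsilon'}=\frac{(N_gS_c)\sqrt{2\ln(1.25/\delta')}}{N_g\epsilon'} .
\end{equation}
Now apply Lemma~\ref{lem:gdp_gaussian_app} again, with "neighboring" redefined as "differing in at most $N_g$ elements", sensitivity $N_gS_c$, and privacy parameter $N_g\epsilon'$. This yields that the same mechanism is $(N_g\epsilon',\delta')$-DP for group changes.

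\textbf{Step 3 (relax $\delta$).} Since $\delta'\le N_g\delta'$ and DP guarantees are monotone in $\delta$, the mechanism is also $(N_g\epsilon',N_g\delta')$-DP. This is the claim.

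The main obstacle is Step 2. Lemma~\ref{lem:gdp_gaussian_app} is stated without the classical restriction $\epsilon<1$ of the Dwork--Roth Gaussian mechanism, but $N_g\epsilon'$ may exceed $1$ even when $\epsilon'<1$. If that restriction must be honored, I would switch to the analytic Gaussian mechanism of \cite{ref66}, whose privacy curve depends only on the ratio $\Delta/\sigma$. Applying it with $\Delta=N_gS_c$ gives valid guarantees for any $\epsilon$. Failing that, I would state the result for general mechanisms with the chain bound $\delta=\sum_{i<N_g}e^{i\epsilon'}\delta'$, and note that it reduces to $N_g\delta'$ to first order when $N_g\epsilon'\ll1$.
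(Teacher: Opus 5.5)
Your diagnosis of the hybrid argument is right, and it is the route the paper's appendix takes. The paper builds the chain $D_0=D,\dots,D_{N'}=D'$ and asserts that summing the $\delta'$ terms gives $N'\delta'$, which silently drops the weights $e^{i\epsilon'}$ you kept. That omission cannot be repaired for general mechanisms, because there the proposition is false. Take datasets $D_0\sim D_1\sim D_2$ (with $D_0,D_2$ differing in two elements) and outputs $\{x,y\}$, and set $\Pr[\mathcal{M}(D_2)=x]=0$, $\Pr[\mathcal{M}(D_1)=x]=\delta'$, $\Pr[\mathcal{M}(D_0)=x]=(1+e^{\epsilon'})\delta'$. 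For small $\delta'$ both adjacent pairs are $(\epsilon',\delta')$-DP, yet $\Pr[\mathcal{M}(D_0)=x]=(1+e^{\epsilon'})\delta'>e^{2\epsilon'}\Pr[\mathcal{M}(D_2)=x]+2\delta'$ for every $\epsilon'>0$. So your fallback $\delta=\delta'\sum_{i<N_g}e^{i\epsilon'}$ is the correct general statement. Any proof of the $N_g\delta'$ form must add structure, as your Steps 1--3 do: Gaussian noise calibrated as in Lemma~\ref{lem:gdp_gaussian_app}, which is a stronger hypothesis than mere $(\epsilon',\delta')$-DP.

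The genuine gap is Step 2, and neither remedy you mention closes it. The calibration in Lemma~\ref{lem:gdp_gaussian_app} is only guaranteed for privacy levels below $1$. Invoking it at level $N_g\epsilon'$ is therefore justified only when $N_g\epsilon'<1$; in that regime Steps 1--3 are a correct proof, even of $(N_g\epsilon',\delta')$. Beyond that regime, the exact (analytic) Gaussian profile shows the target itself fails. Let $c\triangleq\sqrt{2\ln(1.25/\delta')}$ and $\sigma_{\mathrm{DP}}=S_c c/\epsilon'$, and take a query whose group sensitivity is exactly $N_gS_c$ (e.g., a sum). Then the smallest $\delta$ for which the mechanism is $(E,\delta)$-DP under $N_g$-element changes, with $E=N_g\epsilon'$, is $Q\bigl(c-\tfrac{E}{2c}\bigr)-e^{E}Q\bigl(c+\tfrac{E}{2c}\bigr)$. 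This is exactly the true $\delta$ of the classical calibration at level $E$. Step 2 is therefore equivalent to Lemma~\ref{lem:gdp_gaussian_app} holding at level $N_g\epsilon'$, which fails for large levels. Concretely, take $\epsilon'=0.9$, $\delta'=10^{-5}$, $N_g=20$, which is a legitimate single-change calibration. This gives roughly $7\times10^{-4}$, exceeding both your intermediate $\delta'$ and the target $N_g\delta'=2\times10^{-4}$. Hence no Gaussian-specific argument yields $(N_g\epsilon',N_g\delta')$ in general. What you can actually prove is either the chain bound, or Steps 1--3 under the extra assumption $N_g\epsilon'<1$.
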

The proofs are provided in Appendix~\ref{app:gdp_proofs} and ~\ref{app:gdp_proofs_scale}.
In the above, $\epsilon$ and $\delta$ are privacy budget and failure probability, respectively.
In this paper, we consider $D$ and $D'$ are obtained by $\tilde{c}_g$. Moreover, $\mathbf{z}_\mathrm{DP}$ and $\tilde{s}_c$ can be replaced by $n_{\mathrm{DP}}$ and $c_g$.
Hence, the Gaussian mechanism guarantees $(\epsilon,\delta)$-GDP for groups of size $N_g$ if
\begin{equation}
	\label{eq:sigma_DP}
	\sigma_{\mathrm{DP}} \;\geq\; \frac{N_g S_c \,\sqrt{2 \ln(1.25 N_g / \delta)}}{\epsilon}.
\end{equation}

The DP noise is generated once per group symbol and remains constant over the $L_s$ chips of the spreading block.
\begin{theorem}[Advanced Composition \cite{ref66}]
	\label{thm:gdp_composition_app}
	Suppose the perturbed common symbol is generated independently over $T$ slots, each satisfying $(\epsilon,\delta)$-DP. Then, for any $\bar{\delta}>0$, the $T$-slot mechanism satisfies $(\epsilon_T,\delta_T)$-DP with
	\begin{align}
		\epsilon_T
		&=
		\sqrt{2T\ln(1/\bar{\delta})}\,\epsilon
		+
		T\epsilon(e^{\epsilon}-1),\\
		\delta_T
		&=
		T\delta+\bar{\delta}.
	\end{align}
\end{theorem}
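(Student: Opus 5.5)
The plan is to follow the standard privacy-loss random variable argument of Dwork, Rothblum and Vadhan, which is the form cited from \cite{ref66}. Fix neighboring (group-level) datasets $D\sim D'$ and let $M_t$ denote the mechanism producing the perturbed common symbol in slot $t$. The noise is drawn independently across slots, so the joint output $(Y_1,\dots,Y_T)$ has a product density. Its privacy loss therefore decomposes as
\begin{equation*}
L=\sum_{t=1}^{T} L_t,\qquad L_t=\ln\frac{p_{M_t(D)}(Y_t\mid Y_{<t})}{p_{M_t(D')}(Y_t\mid Y_{<t})}.
\end{equation*}
This sum also covers adaptively chosen mechanisms, since the conditioning on $Y_{<t}$ is explicit.

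The first step reduces each $(\epsilon,\delta)$-DP slot to a pure $\epsilon$-DP one. I will invoke the standard decomposition lemma: for any $(\epsilon,\delta)$-DP pair of output distributions there exist distributions $P_t',Q_t'$ such that the pair $(P_t',Q_t')$ satisfies $\epsilon$-DP with no $\delta$, and $P_t$ and $Q_t$ are each within total variation $\delta$ of $P_t'$ and $Q_t'$ respectively. A union bound over the $T$ slots then costs at most $T\delta$. This term is the $T\delta$ in $\delta_T$.

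For the pure pieces, the second step bounds $|L_t|\le\epsilon$ pointwise. The conditional mean satisfies $\mathbb{E}[L_t\mid Y_{<t}]=D_{\mathrm{KL}}(P_t'\|Q_t')\le\epsilon(e^\epsilon-1)$. I expect this KL bound to be the main technical point. I would prove it by symmetrizing, writing $D_{\mathrm{KL}}(P\|Q)\le D_{\mathrm{KL}}(P\|Q)+D_{\mathrm{KL}}(Q\|P)=\int (p-q)\ln(p/q)$. Then I would use $|\ln(p/q)|\le\epsilon$ together with $|p-q|\le (e^\epsilon-1)\min(p,q)$, which gives the bound after integration.

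The third step handles the centered sum. The increments $L_t-\mathbb{E}[L_t\mid Y_{<t}]$ form a martingale difference sequence, and each lies in an interval of length $2\epsilon$. Azuma--Hoeffding then gives
\begin{equation*}
\Pr\!\left[L> T\epsilon(e^\epsilon-1)+\sqrt{2T\ln(1/\bar\delta)}\,\epsilon\right]\le\bar\delta.
\end{equation*}

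The last step converts the tail bound back to DP. For any event $S$, split the output space into the set where $L\le\epsilon_T$ and its complement. This yields $\Pr[M(D)\in S]\le e^{\epsilon_T}\Pr[M(D')\in S]+\bar\delta$ for the pure pieces. Adding the $T\delta$ total-variation cost from the first step gives $\delta_T=T\delta+\bar\delta$.

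In the paper's setting, each slot is the Gaussian mechanism with $\sigma_{\mathrm{DP}}$ chosen by \eqref{eq:sigma_DP}. By Lemma~\ref{lem:gdp_gaussian_app} and Proposition~\ref{prop:gdp_group_app} it is $(\epsilon,\delta)$-GDP per slot, so the theorem applies directly. Independence across slots holds because fresh $n_{\mathrm{DP}}$ is drawn per group symbol.
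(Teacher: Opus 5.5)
Your overall route matches the paper's, whose proof is only a one-line sketch: sum the per-slot privacy losses and apply a Chernoff-type bound with slack $\bar{\delta}$. Your Steps 2 and 3 and the tail-to-DP conversion for the pure surrogates are sound. In particular, Azuma--Hoeffding with increments confined to intervals of length $2\epsilon$ gives exactly $\exp\bigl(-s^2/(2T\epsilon^2)\bigr)$.

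The gap is in how $\delta$ is paid for. Step 1 replaces \emph{both} $M(D)$ and $M(D')$ by pure surrogates $M'(D)$ and $M'(D')$, each within $T\delta$ of the original in total variation. Step 4 then gives $\Pr[M'(D)\in S]\le e^{\epsilon_T}\Pr[M'(D')\in S]+\bar{\delta}$. Returning to the real mechanism costs $T\delta$ on the left and another $T\delta$ on the right, and the latter gets multiplied by $e^{\epsilon_T}$:
\[
\Pr[M(D)\in S]\le e^{\epsilon_T}\Pr[M(D')\in S]+\bigl(1+e^{\epsilon_T}\bigr)T\delta+\bar{\delta}.
\]
So what you have written proves $\delta_T=(1+e^{\epsilon_T})T\delta+\bar{\delta}$, not the claimed $T\delta+\bar{\delta}$.

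To obtain the stated $\delta_T$, you must leave the $D'$ side untouched. For each history $y_{<t}$, build $P_t'$ with $\mathrm{TV}(P_t,P_t')\le\delta$ and $e^{-\epsilon}q_t\le p_t'\le e^{\epsilon}q_t$ pointwise, while keeping $Q_t$ itself.

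Both sides of this band are needed. The familiar one-directional surrogate (only $p_t'\le e^{\epsilon}q_t$) is not enough: $L_t$ is then unbounded below, the Azuma range argument breaks, and your symmetrized KL bound, which uses $|\ln(p/q)|\le\epsilon$, no longer applies.

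Such a $P_t'$ exists. First, clip $p_t$ into $[e^{-\epsilon}q_t,e^{\epsilon}q_t]$:
\begin{itemize}
\item On $A=\{p_t>e^{\epsilon}q_t\}$ this removes mass $a=P_t(A)-e^{\epsilon}Q_t(A)\le\delta$.
\item On $B=\{p_t<e^{-\epsilon}q_t\}$ it adds mass $b=e^{-\epsilon}Q_t(B)-P_t(B)\le e^{-\epsilon}\delta$, using the reverse inequality $Q_t(B)\le e^{\epsilon}P_t(B)+\delta$.
\end{itemize}
Next, restore the net imbalance $|a-b|$ inside the slack of the band, adding only off $A$ or removing only off $B$. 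This is feasible because $e^{-\epsilon}\le 1\le e^{\epsilon}$, and the resulting total-variation distance is $\max(a,b)\le\delta$.

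With $Q_t$ fixed, your hybrid and tail arguments give
\[
\Pr[M(D)\in S]\le \Pr[M'(D)\in S]+T\delta\le e^{\epsilon_T}\Pr[M(D')\in S]+\bar{\delta}+T\delta,
\]
which is the claimed $(\epsilon_T,\,T\delta+\bar{\delta})$ guarantee.
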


The parameter $\bar{\delta}$ is a constant that can be optimized by the analyst to balance the composed privacy budget.
In our subsequent analysis, we fix $\bar{\delta} = \delta$ to obtain a closed-form expression without loss of generality.
This theorem reveals that privacy degrades with repeated releases and the proof is provide in Appendix~\ref{app:gdp_proofs_time}. Each independent DP release adds fresh noise, and the cumulative privacy loss grows with the number of slots $T$ as $\epsilon_T = O(\sqrt{T}\epsilon + T\epsilon^2)$. Hence, longer operation weakens privacy guarantees. It is a trade-off that should be considered by adjusting $T$, $\epsilon$, and $\delta$. More detailed theorems and their complete proofs can be found in \cite{ref66}.

The received signal at user $u \in \mathcal{U}$ is expressed as
\begin{equation}
	y_{u} = \sum_{k \in \mathcal{K}} \mathbf{h}_{k,u}^\mathrm{H}\mathbf{x}_k + n_{u},
\end{equation}
where $n_{u} \sim \mathcal{CN}(0,\sigma^2)$ is the additive white Gaussian noise (AWGN).
The common stream contribution at user $u$ given by
\begin{equation}
	y_{u}^{\mathrm{com}} = \sum_{k \in \mathcal{K}_u} \sqrt{P_{\mathrm{t},k}} 
	\rho_{\mathrm{c},k,g}\, \mathbf{h}_{k,u}^\mathrm{H}\mathbf{w}_{\mathrm{c},k,g}\, c_g.
\end{equation}
Based on the association indicators, we define the set of HAPSs serving user $u$ as $\mathcal{K}_u \triangleq \{ k \in \mathcal{K} \mid a_{u,k} = 1 \}$.
Similarly, the contribution of the private stream for user $u$ is given by
\begin{equation}
	y_{u}^{\mathrm{pri}} = \sum_{k \in \mathcal{K}_u} \sqrt{P_{\mathrm{t},k}}\, 
	\rho_{\mathrm{b},k}\,\mathbf{h}_{k,u}^\mathrm{H}\mathbf{p}_{k,u}\, b_u.
\end{equation}

The SINR of the common stream for user $u \in \mathcal{U}_g$ is
\begin{equation}
	\gamma_{\mathrm{c},u} = 
	\frac{\sum\limits_{k \in \mathcal{K}_u} P_{\mathrm{t},k} \rho_{\mathrm{c},k,g}^2 \big|  
		\mathbf{h}_{k,u}^\mathrm{H}\mathbf{w}_{\mathrm{c},k,g}\big|^2}{I_{u}^{\mathrm{pri}} + I_{u}^{\mathrm{AN}} + \sigma^2},
\end{equation}
where the interference from private streams is given by
\begin{equation}
	I_{u}^{\mathrm{pri}} = \sum_{k \in \mathcal{K}_u} P_{\mathrm{t},k}\rho_{\mathrm{b},k}^2 
	\sum_{\substack{u' \in \mathcal{U}_k}} 
	\left| \mathbf{h}_{k,u}^\mathrm{H} \mathbf{p}_{k,u'} \right|^2 .	
\end{equation}
Furthermore, the interference caused by AN is expressed as
\begin{equation}
	I_{u}^{\mathrm{AN}} = \sum_{k \in \mathcal{K}} P_{\mathrm{t},k} \rho_{\mathrm{d},k}^2 
	\left \| \mathbf{P}_{\perp,k}^\mathrm{H} \mathbf{h}_{k,u} \right \|_2^2.
\end{equation}

Under perfect CSI assumptions, the null-space AN does not interfere with legitimate users.
However, in practice, residual leakage may occur due to channel estimation errors.
Accordingly, the AN interference at user $u$ is modeled as $I_{u}^{\mathrm{AN}} \approx 0$.

This modeling choice corresponds to the assumption that the signals 
transmitted from different HAPSs are not phase-synchronized, 
so that their received powers add non-coherently. 
In contrast, the coherent case with perfectly aligned phases would require 
the modulus to be taken after summation.

Accordingly, the achievable common rate is
\begin{equation}
	R_{c,g} = \min_{u \in \mathcal{U}_g} \log_2\!\left(1+\gamma_{\mathrm{c},u}\right).
\end{equation}

The SINR of the private stream for user $u \in \mathcal{U}$ is given by
\begin{equation}
	\gamma_{\mathrm{p},u} = 
	\frac{\sum\limits_{k \in \mathcal{K}_u} P_{\mathrm{t},k} \rho_{\mathrm{b},k}^2 \Big|
		\mathbf{h}_{k,u}^\mathrm{H}\mathbf{p}_{k,u}\Big|^2}{I_{u}^{\mathrm{MUI}} + I_{u}^{\mathrm{AN}} + \sigma^2},
\end{equation}
where the multi-user interference (MUI) is defined as

\begin{equation}
	I_{u}^{\mathrm{MUI}} = \sum_{k \in \mathcal{K}_u} P_{\mathrm{t},k} \rho_{\mathrm{b},k}^2 
	\sum_{\substack{u' \in \mathcal{U}_k \\ u' \neq u}} 
	\left| \mathbf{h}_{k,u}^\mathrm{H} \mathbf{p}_{k,u'} \right|^2 .
\end{equation}

Thus, the achievable private rate for user $u$ is
\begin{equation}
	R_{\mathrm{p},u} = \log_2\!\left(1+\gamma_{\mathrm{p},u}\right).
\end{equation}

The received SINR at eavesdropper $e \in \mathcal{E}$ when attempting to decode the signal of user $u$ from HAPS $k$ is expressed as

\begin{equation}
	\gamma_{e,u} =
	\frac{\sum\limits_{k \in \mathcal{K}_u} P_{\mathrm{t},k}\rho_{\mathrm{b},k}^{2}
		\left| \mathbf{h}_{k,e}^\mathsf{H} \mathbf{p}_{k,u} \right|^2}
	{ I_{e}^{\mathrm{MUI}} + I_{e}^{\mathrm{C}} + I_{e}^{\mathrm{AN}} + \sigma^2 } .
\end{equation}

where $\mathbf{h}_{k,e} \in \mathbb{C}^{M \times 1}$ denotes the channel vector 
between HAPS $k$ and eavesdropper $e$. Moreover, $I_{e}^{\mathrm{MUI}}$, $I_{e}^{\mathrm{C}}$, and $I_{e}^{\mathrm{AN}}$ denote the multi-user interference, the common stream interference, and the AN observed at the eavesdropper, respectively. $I_{e}^{\mathrm{C}}$ is given by

\begin{equation}
	I_{e}^{\mathrm{C}} = \sum_{k \in \mathcal{K}} P_{\mathrm{t},k}\,\rho_{\mathrm{c},k,g}^{2}\,
	\left| \mathbf{h}_{k,e}^{\mathrm{H}} \mathbf{w}_{\mathrm{c},k,g} \right|^{2}.
\end{equation}

The multi-user interference (MUI) at eavesdropper $e$ is
\begin{equation}
	I_{e}^{\mathrm{MUI}} = \sum_{k \in \mathcal{K}} P_{\mathrm{t},k}\,\rho_{\mathrm{b},k}^2
	\sum_{\substack{u' \in \mathcal{U}_k \\ u' \neq u}}
	\left|\mathbf{h}_{k,e}^\mathrm{H} \mathbf{p}_{k,u'} \right|^2.
\end{equation}

The AN observed at eavesdropper $e$ is
\begin{equation}
	I_{e}^{\mathrm{AN}} = \sum_{k \in \mathcal{K}} P_{\mathrm{t},k}\,\rho_{\mathrm{d},k}^2
	\left \| \mathbf{P}_{\perp,k}^\mathrm{H} \mathbf{h}_{k,e} \right \|_2^2.
\end{equation}

\paragraph{Non-Colluding Eavesdroppers.}
When the eavesdroppers act independently, the secrecy rate of user $u$ is given by
\begin{equation}
	R_{\mathrm{s},u}^{\mathrm{nc}} = \left[ R_{\mathrm{p},u} - \max_{e \in \mathcal{E}} R_{e,u} \right]^+,
\end{equation}
where $R_{e,u} = \log_2 \left( 1 + \gamma_{e,u} \right)$ is the achievable rate of eavesdropper $e$ for decoding the signal of user $u$.

\paragraph{Colluding Eavesdroppers.}
If the eavesdroppers cooperate by combining their received signals, the secrecy rate for user $u$ is
\begin{equation}
	R_{\mathrm{s},u}^{\mathrm{c}} = \left[ R_{\mathrm{p},u} - \log_2 \left( 1 + \sum_{e \in \mathcal{E}} \gamma_{e,u} \right) \right]^+.
\end{equation}

\subsection{Fairness in Authentication}

Fairness in authentication systems is essential for ensuring reliability and equal treatment of all users. Unequal resource distribution can compromise system security by leaving weaker users vulnerable. Despite its importance, fairness is often neglected in many frameworks.
We measure fairness across users using Jain's fairness index:
\begin{equation}
	\mathcal{J}(\mathbf{R}) 
	= \frac{\left( \sum_{u \in \mathcal{U}} R_{\mathrm{p},u} \right)^{2}}
	{|\mathcal{U}| \sum_{u \in \mathcal{U}} R_{\mathrm{p},u}^{2}}.
	\label{eq:jain}
\end{equation}
Here, $\mathbf{R}\triangleq [\,R_{\mathrm{p},u}\,]_{u\in\mathcal{U}} \in \mathbb{R}_+^{|\mathcal{U}|}$ represents the per-user achievable rates. Therefore, $\mathcal{J}=1$ indicates perfect equality, while $\mathcal{J}\to 1/|\mathcal{U}|$ indicates extreme unfairness.

\section{Problem Formulation}
\label{sec:problem}

In this section, we aim to formulate our problem to maximize a secrecy- and fairness-aware network utility.

\subsection{Graph-Regularized HAPS Placement and Group Cohesion}

We introduce two regularizers: a graph-regularized layout term that keeps coupled HAPSs close, and a group-cohesion term that ensures users in the same group receive signals from similar HAPSs as much as possible.

\paragraph{Graph-Regularized HAPS Layout}
Stack the HAPS coordinates as \(\mathbf{Q}\! =\![\mathbf{q}_1^\mathsf{T};\ldots;\mathbf{q}_{|\mathcal{K}|}^\mathsf{T}]\in\mathbb{R}^{|\mathcal{K}|\times 3}\).
Let \(\mathbf{W}\in\mathbb{R}^{|\mathcal{K}|\times|\mathcal{K}|}\) be a symmetric, zero-diagonal edge-weight matrix whose \((i,j)\)-entry for \(i\neq j\) is chosen as a monotonically increasing function of an inter-HAPS metric \(R^{\mathrm{hh}}_{i,j}\). 
It denotes a symmetric inter-HAPS affinity metric that reflects the potential rate coupling between HAPS $i$ and HAPS $j$, which can be defined, for example, based on backhaul capacity, inter-beam interference, or spatial proximity. In this work, $R^{\mathrm{hh}}_{i,j}$ is treated as a known scalar obtained from large-scale channel statistics (design choice).
A convenient rate-based choice is
\begin{equation}
	\label{eq:W}
	[\mathbf{W}]_{i,j}=\exp\!\Big(\alpha\,\frac{R^{\mathrm{hh}}_{i,j}+R^{\mathrm{hh}}_{j,i}}{2}\Big),\qquad [\mathbf{W}]_{i,i}=0,
\end{equation}
where \(\alpha>0\) is a scaling parameter and \(R^{\mathrm{hh}}_{i,j}\) denotes a rate-related inter-HAPS metric. Let \(\mathbf{1}\) denote the all-ones vector. Define \(\mathbf{D}=\mathrm{diag}(\mathbf{W}\mathbf{1})\) and the combinatorial Laplacian \(\mathbf{L}=\mathbf{D}-\mathbf{W}\). The layout regularizer is
\begin{equation}
	J_{\mathrm{l}} \;=\; \mathrm{tr}\!\big(\mathbf{Q}^\mathsf{T}\mathbf{L}\mathbf{Q}\big),
	\label{eq:J_layout_revised}
\end{equation}
which penalizes large distance between strongly connected HAPS pairs. The detailed derivation is provided in Appendix~\ref{app:layout_proof}.
To normalize the edge weights to $[0,1]$, we use
\begin{equation}
	\label{eq:W_tilde}
	\tilde W_{i,j} = \frac{[\mathbf{W}]_{i,j}}{\sum_{i'<j'} [\mathbf{W}]_{i',j'}},\qquad i\neq j,\quad \tilde W_{i,i}=0.
\end{equation}

Then, normalize the weighted squared distances by the maximum feasible separation \(d_{\max}\):
\begin{equation}
	\tilde J_{\mathrm{l}} = \frac{\sum_{i<j} \tilde W_{i,j}\,\|\mathbf{q}_i-\mathbf{q}_j\|_2^2}{d_{\max}^2} \;\in\; [0,1].
\end{equation}

\paragraph{Group (Tag) Cohesion Regularizer}

We consider two association models: hard association, where $a_{u,k}\in\{0,1\}$, and soft association, where binary indicators are replaced by continuous weights $\pi_{u,k}\in[0,1]$. Moreover, hard association can operate in single-connect or multi-connect mode.

\textit{Hard association.}
In single-connect mode, each user is served by exactly one HAPS ($\sum_{k\in\mathcal{K}} a_{u,k}=1$) and we define
$p_{g,k}\triangleq\frac{\sum_{u\in\mathcal{U}_g} a_{u,k}}{N_g}$.
For multi-connect case that users connect to multiple HAPSs ($\sum_{k\in\mathcal{K}} a_{u,k}\ge 1$), we normalize the group association weights as
\begin{equation}
	p_{g,k}
	\triangleq
	\frac{1}{N_g}\sum_{u\in\mathcal{U}_g}
	\frac{a_{u,k}}{\sum_{k' \in \mathcal{K}} a_{u,k'}}.
	\label{eq:p_g_k_def}
\end{equation}
The concentration index for group \(g\) is
$\phi_g \triangleq \sum_{k\in\mathcal{K}} p_{g,k}^2,$
and the cohesion penalty is
\begin{equation}
	J_{\mathrm{tag}}
	\triangleq
	\sum_{g\in\mathcal{G}}\Big(1-\phi_g\Big),
	\label{eq:J_tag_revised}
\end{equation}
where $p_{g,k}$ is defined according to the considered association mode.
A small \(J_{\mathrm{tag}}\) indicates that users' authentication tags are concentrated on few HAPSs. This lowers security and makes the system vulnerable to attacks on just one HAPS, which could compromise the whole group.

\textit{Soft association.}
Under soft association, $a_{u,k}\in\{0,1\}$ is replaced by $\pi_{u,k}\in[0,1]$, and we set $p_{g,k}\triangleq\frac{1}{N_g}\sum_{u\in\mathcal{U}_g}\pi_{u,k}$. The definitions of \(\phi_g\) and \(J_{\mathrm{tag}}\) remain the same but are computed from \(\pi_{u,k}\).

In the simulation, we use $p_{g,k}$ as in~\eqref{eq:p_g_k_def} and1 normalize
\begin{equation}
	\bar{J}_{\mathrm{tag}}\triangleq\sum_{g\in\mathcal{G}}\frac{1-\phi_g}{1-1/N_g}.
\end{equation}

\subsection{Secrecy- and Fairness-Aware Objective}

We now formalize the constrained optimization problem. 
Due to limited radio resources, some users may remain unserved in a given time slot. Hence, we define the set of served users as
$
	\mathcal{U}^{\mathrm{srv}} \triangleq 
	\Big\{ u \in \mathcal{U} \;\big|\; \sum_{k \in \mathcal{K}} a_{u,k} \ge 1 \Big\}
$.
Accordingly, the network coverage ratio is defined as
$
	\xi_{\mathrm{cov}} \triangleq 
	\frac{|\mathcal{U}^{\mathrm{srv}}|}{|\mathcal{U}|} \in [0,1]
$.
Based on the definitions, the problem is formulated as
\begin{equation}
	\footnotesize
	\begin{aligned}
		\min_{\boldsymbol{\Omega}}\quad 
		& \tilde{J}_{\mathrm{l}} 
		+ \lambda_{\mathrm{tag}} \bar{J}_{\mathrm{tag}}
		- \eta_J \mathcal{J}(\mathbf{R}) \\
		\text{s.t.}\quad 
		& \text{C1: } \|\dot{\mathbf{q}}_k\|_2 \le v_{\max},
		\quad \forall k \in \mathcal{K}, \\
		& \text{C2: } \|\mathbf{q}_i - \mathbf{q}_j\|_2 \ge d_{\min},
		\quad \forall i \neq j,\ i,j \in \mathcal{K}, \\
		& \text{C3: } 
		\Pr\!\Big(\sum_{e \in \mathcal{E}} \gamma_{e,u} 
		> \gamma_{e}^{\mathrm{th,c}}\Big) \le \epsilon_c,
		\quad \forall u \in \mathcal{U}, \\
		& \text{C4: } 
		\Pr\!\Big(\max_{e \in \mathcal{E}} \gamma_{e,u} 
		> \gamma_{e}^{\mathrm{th,nc}}\Big) \le \epsilon_{nc},
		\quad \forall u \in \mathcal{U}, \\		
		& \text{C5: } R_{\mathrm{p},u} \ge R_{\min},
		\quad \forall u \in \mathcal{U}, \\		
		& \text{C6: } \sum_{u \in \mathcal{U}} a_{u,k} \le U_{\mathrm{perRB}},
		\quad \forall k \in \mathcal{K}, \\		
		& \text{C7: } 
		0 \le \rho_{\mathrm{b},k}, \rho_{\mathrm{c},k,g}, \rho_{\mathrm{d},k} \le 1,
		\quad \forall k,g, \\		
		& \text{C8: } 
		\xi_{\mathrm{cov}}
		\ge \xi_{\mathrm{req}} , \\
		& \text{C9: }
		\rho_{\mathrm{b},k}^2 + \sum_{g\in\mathcal{G}_k}\rho_{\mathrm{c},k,g}^2 
		+ \rho_{\mathrm{d},k}^2 \le 1,
		\quad \forall k \in \mathcal{K}, \\
		& \text{C10: } R_{c}^{\mathrm{tot}} > \sum_{g \in \mathcal{G}} w_{c,g} R_{c,g}
	\end{aligned}
	\label{eq:problem}
\end{equation}

Here, the optimization variable set is defined as
$
\boldsymbol{\Omega} \triangleq
\big(
\{\mathbf{q}_k\},
\{a_{u,k}\},
\{\mathbf{p}_{k,u}\},
\{\mathbf{w}_{\mathrm{c},k,g}\},
\{\rho_{\mathrm{b},k}\},
\{\rho_{\mathrm{c},k,g}\},
\{\rho_{\mathrm{d},k}\}
\big)
$.
Moreover, $\lambda_{\mathrm{tag}}, \eta_J \ge 0$ are weighting coefficients, and 
$\xi_{\mathrm{req}} \in (0,1]$ denotes the minimum required coverage ratio. $\gamma_{e}^{\mathrm{th,c}}$ and $\gamma_{e}^{\mathrm{th,nc}}$ denote the target
eavesdropper SINR thresholds for the colluding and non-colluding models, respectively,
and $\epsilon_c$ and $\epsilon_{nc}$ are the corresponding maximum allowable violation
probabilities.
We consider a discrete-time mobility model with step $\Delta t$, where $\dot{\mathbf{q}}_k$ is the velocity of HAPS $k$. (C1) limits the maximum velocity to $v_{\max}$, while (C2) enforces a minimum safety distance $d_{\min}$. (C3) and (C4) impose secrecy reliability requirements for colluding and non-colluding eavesdroppers, respectively. (C5) guarantees QoS via a per-user private rate threshold $R_{\min}$. (C6) restricts the number of users served per HAPS per resource block. (C7) ensures RSMA power-splitting feasibility. (C8) enforces a minimum network-wide coverage ratio $\xi_{\mathrm{req}}$. (C9) represents the total power limitation. Finally, (C10) defines the total rate $R_{c}^{\mathrm{tot}}$ as the weighted sum of the common signal rate $R_{c,g}$. 
$w_{c,g}$ sets group priority, e.g., assigning higher weights to emergency communications.

\section{SAFA-MZ Framework and Algorithms}
\label{sec:solution}

Problem~\eqref{eq:problem} is a non-convex mixed-integer program due to coupled HAPS placement, association, and power allocation. Furthermore, the probabilistic secrecy constraints (C3) and (C4) rely on stochastic channels, making gradient-based methods inapplicable.
We address these challenges via two strategies: (i) a derivative-free RCEM algorithm, and (ii) a scalable graph RL enhanced by GAE. In both, precoders follow RZF and MRT rules in section~\ref{sec:precoders}, reducing the decision variables.
We enforce the probabilistic secrecy constraints (C3)--(C4) in~\eqref{eq:problem}, via a Monte Carlo (MC) approximation.
In our simulation, the secrecy constraints are enforced in a worst-case sense over all users. 
Define the colluding and non-colluding worst-case eavesdropper SINRs as
\begin{align}
	Z_c(\boldsymbol{\Omega}) 
	&\triangleq \max_{u\in\mathcal{U}} \sum_{e\in\mathcal{E}} \gamma_{e,u}(\boldsymbol{\Omega}),\\
	Z_{nc}(\boldsymbol{\Omega})
	&\triangleq \max_{u\in\mathcal{U}} \max_{e\in\mathcal{E}} \gamma_{e,u}(\boldsymbol{\Omega}).
\end{align}

Enforcing 
$\Pr(Z_c(\boldsymbol{\Omega}) > \gamma_{e}^{\mathrm{th,c}})\le \epsilon_c$
and similarly for $Z_{nc}$ conservatively approximates per-user constraints (C3)--(C4), since
$
\{ Z_c(\boldsymbol{\Omega}) > \gamma_{e}^{\mathrm{th,c}} \}
=
\bigcup_{u\in\mathcal{U}}
\{
\sum_{e\in\mathcal{E}}
\gamma_{e,u}(\boldsymbol{\Omega})
>
\gamma_{e}^{\mathrm{th,c}}
\},
$
which implies that if the worst-case constraint holds, then all per-user constraints are simultaneously satisfied.
Accordingly, the violation probabilities of (C3)--(C4) are
\begin{align}
	p_c(\boldsymbol{\Omega})
	&\triangleq \Pr\!\Big(Z_c(\boldsymbol{\Omega}) > \gamma_{e}^{\mathrm{th,c}}\Big),\\
	p_{nc}(\boldsymbol{\Omega})
	&\triangleq \Pr\!\Big(Z_{nc}(\boldsymbol{\Omega}) > \gamma_{e}^{\mathrm{th,nc}}\Big).
\end{align}
Using $N_{\mathrm{mc}}$ independent and identically distributed (i.i.d.) channel realizations, we estimate them as
\begin{align}
	\hat p_c(\boldsymbol{\Omega})
	&\triangleq 1-\frac{1}{N_{\mathrm{mc}}}\sum_{n=1}^{N_{\mathrm{mc}}}
	\mathbb{I}\!\Big\{ Z_c^{(n)}(\boldsymbol{\Omega}) < \gamma_{e}^{\mathrm{th,c}}\Big\},\\
	\hat p_{nc}(\boldsymbol{\Omega})
	&\triangleq 1-\frac{1}{N_{\mathrm{mc}}}\sum_{n=1}^{N_{\mathrm{mc}}}
	\mathbb{I}\!\Big\{ Z_{nc}^{(n)}(\boldsymbol{\Omega}) < \gamma_{e}^{\mathrm{th,nc}}\Big\},
\end{align}
where $Z_c^{(n)}(\boldsymbol{\Omega})$ and $Z_{nc}^{(n)}(\boldsymbol{\Omega})$ are computed from the
$n$-th randomized channel realization. 
Hence, (C3)--(C4) are approximated by $\hat p_c(\boldsymbol{\Omega})\le \epsilon_c$ and
$\hat p_{nc}(\boldsymbol{\Omega})\le \epsilon_{nc}$, respectively.

\subsection{RCEM}

RCEM solves a static instance of~\eqref{eq:problem} by iteratively updating a sampling distribution towards feasible solutions without gradient information.

Let $\boldsymbol{\theta}\in\mathbb{R}^{D}$ denote the concatenated decision vector excluding precoders, where $D$ is the total dimensionality of continuous variables and logits. $\boldsymbol{\theta}$ comprises $\{\mathbf{q}_k\}$, $\{\rho_{\mathrm{b},k},\rho_{\mathrm{c},k,g},\rho_{\mathrm{d},k}\}$, and $\{a_{u,k}\}$.
At iteration $t$, $N_s$ candidate solutions are independently sampled from a diagonal multivariate Gaussian distribution
$\boldsymbol{\theta}_i^{(t)} 
	\sim 
	\mathcal{N}\!\left(\boldsymbol{\mu}^{(t)}, \boldsymbol{\Sigma}^{(t)}\right)$, where
	 $i=1,\dots,N_s$
and the covariance matrix is parameterized as $
	\boldsymbol{\Sigma}^{(t)}
	=
	\mathrm{diag}\!\big(\boldsymbol{\sigma}_r^{(t)} \odot \boldsymbol{\sigma}_r^{(t)}\big)$.
Each sampled vector is mapped to a feasible decision tuple $\boldsymbol{\Omega}_i^{(t)}$ as follows:
\begin{itemize}
	\item \textit{Clipping:} Continuous variables are clipped to their admissible ranges, e.g., $0\le \rho_{\mathrm{b},k},\rho_{\mathrm{c},k,g},\rho_{\mathrm{d},k}\le 1$.
	\item \textit{Association binarization:} The association logits are thresholded to obtain $a_{u,k}\in\{0,1\}$.
	\item \textit{Capacity repair (C6):} If $\sum_{u\in\mathcal{U}} a_{u,k} > U_{\mathrm{perRB}}$ for some $k$, only the $U_{\mathrm{perRB}}$ users with the largest instantaneous channel gains to HAPS $k$ are retained.
	\item \textit{Coverage repair (C8):} If the achieved coverage ratio
	$\xi_{\mathrm{cov}}=\frac{1}{|\mathcal{U}|}\sum_{u\in\mathcal{U}}\mathbb{I}\!\left\{\sum_{k\in\mathcal{K}}a_{u,k}\ge 1\right\}$
	is below $\xi_{\mathrm{req}}$, additional unserved users are greedily associated to their best available HAPSs (subject to C6) until $\xi_{\mathrm{cov}}\ge\xi_{\mathrm{req}}$ or no feasible assignment remains.
	\item \textit{Rate repair (C10):} Constraint C10 is satisfied within tolerance $\varepsilon_{C10}$. If 
	$\left| R_{c}^{\mathrm{tot}} - w_{c,g} R_{c,g} \right| > \varepsilon_{C10}$, 
	the rate variables are adjusted to restore feasibility.
\end{itemize}
This feasibility repair projects onto (C6), (C8), and (C10). The remaining constraints, including probabilistic secrecy via MC approximations, are handled by assigning large penalties to infeasible samples to prevent their selection as elites.
For each repaired candidate, the objective value is computed according to~\eqref{eq:problem} and the sample reward is defined as
\begin{equation}
	J(\boldsymbol{\Omega}) \triangleq
	\tilde{J}_{\mathrm{l}}+\lambda_{\mathrm{tag}}\bar{J}_{\mathrm{tag}}-\eta_J\mathcal{J}(\mathbf{R}).
\end{equation}

The top $N_e$ elite samples with the smallest objective values $J(\boldsymbol{\Omega})$ are selected, and Gaussian parameters are updated as
\begin{align}
	\boldsymbol{\mu}^{(t+1)}
	&= \frac{1}{N_e}\sum_{i\in\mathcal{I}_{\mathrm{elite}}^{(t)}}\boldsymbol{\theta}_i^{(t)}, \\
	\boldsymbol{\sigma}_r^{(t+1)}
	&=
	\sqrt{
		\frac{1}{N_e}
		\sum_{i\in\mathcal{I}_{\mathrm{elite}}^{(t)}}
		\left(\boldsymbol{\theta}_i^{(t)}-\boldsymbol{\mu}^{(t+1)}\right)^{\odot 2}}
	+\epsilon_{\mathrm{RCEM}},
\end{align}
where $\epsilon_{\mathrm{RCEM}}>0$ is a numerical stability constant added to prevent premature variance collapse.
$\mathcal{I}_{\mathrm{elite}}^{(t)}$ denotes the index set of the $N_e$ samples with the smallest objective values $J(\boldsymbol{\Omega}_i^{(t)})$ at iteration $t$.
Here, $\boldsymbol{\Omega}_i^{(t)}$ represents the repaired feasible decision tuple corresponding to the sampled vector $\boldsymbol{\theta}_i^{(t)}$ at iteration $t$.
Additionally, $\boldsymbol{\mu}^{(t)}\in\mathbb{R}^D$ and $\boldsymbol{\sigma}_r^{(t)}\in\mathbb{R}_+^D$ are the mean and element-wise standard deviation, respectively.
Since RCEM optimizes a static decision vector, it must be rerun for each snapshot and does not generalize across topologies.

\begin{algorithm}[t]
	\caption{RCEM}
	\label{alg:rcem_revised}
	\footnotesize
	\begin{algorithmic}[1]
		\Require
		$N_s$,$N_e$,$N_{\mathrm{iter}}$,$N_{\mathrm{mc}},\boldsymbol{\mu}^{(0)},\boldsymbol{\sigma}_r^{(0)},\gamma_{e}^{\mathrm{th,c}},\gamma_{e}^{\mathrm{th,nc}},\epsilon_c,\epsilon_{nc},R_{\min},\xi_{\mathrm{req}}$
		\Ensure Optimized decision vector $\boldsymbol{\theta}^\star$
		
		\State Initialize $\boldsymbol{\mu}\leftarrow\boldsymbol{\mu}^{(0)}$,
		$\boldsymbol{\sigma}_r\leftarrow\boldsymbol{\sigma}_r^{(0)}$
		\For{$t=1$ to $N_{\mathrm{iter}}$}
		\State Sample $\boldsymbol{\theta}_i \sim \mathcal{N}(\boldsymbol{\mu},\mathrm{diag}(\boldsymbol{\sigma}_r^2))$, $i=1,\dots,N_s$
		\For{$i=1$ to $N_s$}
		\State Map $\boldsymbol{\theta}_i \rightarrow \boldsymbol{\Omega}_i$
		\State \quad Clip $\rho_{\mathrm{b},k},\rho_{\mathrm{c},k,g},\rho_{\mathrm{d},k}\in[0,1]$
		\State \quad Binarize association logits $\rightarrow a_{u,k}$
		\State \quad Enforce (C6), (C8) and (C10) via repair
		\State Estimate $\hat p_c(\boldsymbol{\Omega}_i),\hat p_{nc}(\boldsymbol{\Omega}_i)$ using $N_{\mathrm{mc}}$
		\State Apply penalty if (C1)--(C5) or (C9) are violated
		\State Compute $R_{\mathrm{p},u}$ and $\mathcal{J}(\mathbf{R})$ and evaluate $J(\boldsymbol{\Omega}_i)$
		\EndFor
		\State Select $\mathcal{I}_{\mathrm{elite}}$ (smallest $N_e$ objective values)
		\State Calculate $\boldsymbol{\mu}^{(t+1)}$ and $\boldsymbol{\sigma}_r^{(t+1)}$
		\EndFor
		\State \Return $\boldsymbol{\theta}^\star=\boldsymbol{\mu}^{(t+1)}$
	\end{algorithmic}
\end{algorithm}

\subsection{GA2C}

Directly applying actor-critic to raw features in large-scale NTNs is inefficient and sensitive to node ordering. We employ a GAE to encode the HAPS interaction graph into a compact, permutation-invariant embedding. Fusing this topology-aware representation with observations improves stability, convergence, and generalization across network sizes and configurations.
Accordingly, we design GA2C that learns a policy mapping each NTN snapshot to the joint decisions
$\{\mathbf{q}_k\}$, $\{a_{u,k}\}$, and $\{\rho_{\mathrm{b},k},\rho_{\mathrm{c},k,g},\rho_{\mathrm{d},k}\}$.

\subsubsection{Graph Construction and Normalization}

At time step $t$, the RL state is defined as
$\mathcal{S}_t \triangleq (\mathbf{o}_t,\mathcal{G}_t)$, where $\mathcal{G}_t \triangleq (\mathbf{X}_t,\mathbf{W}_t)$ and $\mathbf{o}_t$ denotes the environment observation vector, including $\{\mathbf{q}_k\}$, $\{a_{u,k}\}$, $\{\rho_{\mathrm{b},k},\rho_{\mathrm{c},k,g},\rho_{\mathrm{d},k}\}$, and large-scale channel statistics. Moreover, user-side information (user locations, tag identities, and HAPS--user large-scale channel features) is included in $\mathbf{o}_t$ and fed into a separate encoder in the actor to produce new association decisions. This design keeps permutation invariance over HAPS nodes while still allowing user-dependent associations. The graph $\mathcal{G}_t$ represents HAPS interactions.
The node-feature matrix $\mathbf{X}_t\in\mathbb{R}^{|\mathcal{K}|\times d_x}$, where $d_x$ is the node-feature dimension, contains per-HAPS attributes including $\{\mathbf{q}_k\}$, the number of served users, and power indicators. The weighted adjacency matrix $\mathbf{W}_t\in\mathbb{R}^{|\mathcal{K}|\times|\mathcal{K}|}$ encodes inter-HAPS coupling metrics in~\eqref{eq:W}, is symmetric, and has zero diagonal.
Appendix~\ref{app:perm_priv} proves the permutation equivariance and the pooling operator's invariance, ensuring scalability and independence from node indexing.

Let $\breve{\mathbf{W}}_t$ be the element-wise min--max normalization of $\mathbf{W}_t$ into $[0,1]$ (for stable reconstruction loss), with zero diagonal.
The symmetrically normalized adjacency is defined as
\begin{equation}
	\hat{\mathbf{A}}_t
	=
	\mathbf{D}_t^{-1/2}(\breve{\mathbf{W}}_t+\mathbf{I}_{|\mathcal{K}|})\mathbf{D}_t^{-1/2},
	\quad
	\mathbf{D}_t=\mathrm{diag}((\breve{\mathbf{W}}_t+\mathbf{I}_{|\mathcal{K}|})\mathbf{1}),
\end{equation}
where $\hat{\mathbf{A}}_t$ is the symmetrically normalized adjacency used as the GNN propagation operator and $\mathbf{1}\in\mathbb{R}^{|\mathcal{K}|}$ is the all-ones vector.
The GNN encoder follows GCN propagation rule:
\begin{equation}
	\label{eq:GNN_Enc}
	\mathbf{H}_{enc,t}^{(\ell+1)}
	=
	\sigma_f\!\left(
	\hat{\mathbf{A}}_t\,\mathbf{H}_{enc,t}^{(\ell)}\mathbf{W}^{(\ell)}
	\right),
\end{equation}
where $\mathbf{H}_{enc,t}^{(\ell)}\in\mathbb{R}^{|\mathcal{K}|\times d_\ell}$ is the node-embedding matrix at layer $\ell$, with $\mathbf{H}_{enc,t}^{(0)}=\mathbf{X}_t$, $d_\ell$ the hidden dimension, $\mathbf{W}^{(\ell)}\in\mathbb{R}^{d_\ell\times d_\ell}$ a learnable weight, and $\sigma_f(\cdot)$ a nonlinear activation.

To obtain topology-aware representations, we employ a GAE operating on $(\mathbf{X}_t,\hat{\mathbf{A}}_t)$. The encoder produces node embeddings that are mean-pooled to yield a graph-level embedding $\mathbf{z}_g\in\mathbb{R}^{d_z}$. Hence, the encoder maps $\mathcal{G}_t$ to a graph-level embedding $\mathbf{z}_g\in\mathbb{R}^{d_z}$. The decoder reconstructs the normalized adjacency matrix and pooled features, yielding $\hat W_{ij}$ and $\hat{\mathbf{X}}$.
The adjacency reconstruction loss is
\begin{equation}
	\mathcal{L}_{\mathrm{adj}}
	=
	-\sum_{i,j}\Big(
	\breve W_{ij}\log \hat W_{ij} + (1-\breve W_{ij})\log(1-\hat W_{ij})
	\Big),
\end{equation}
where $\hat W_{ij}$ is obtained via a sigmoid activation to match $\breve W_{ij}\in[0,1]$. For brevity, the time index $t$ is omitted in $\breve W_{ij}$ and $\hat W_{ij}$.
Hence, the GA2C auxiliary objective at time $t$ is
\begin{equation}
	\mathcal{L}_{\mathrm{GA2C}}
	=
	\alpha_{\mathrm{adj}}\,\mathcal{L}_{\mathrm{adj}}
	+
	\alpha_x\,\big\|
	\bar{\mathbf{X}}_t - \hat{\mathbf{X}}_t
	\big\|_2^2,
\end{equation}
where $\bar{\mathbf{X}}_t \triangleq \frac{1}{|\mathcal{K}|}\sum_{k\in\mathcal{K}} \mathbf{X}_t(k,:)$ is the mean-pooled vector, $\hat{\mathbf{X}}_t\in\mathbb{R}^{d_x}$ its reconstruction from $\mathbf{z}_g$, and $\alpha_{\mathrm{adj}},\alpha_x\ge 0$ weight adjacency and feature reconstruction terms.

\subsubsection{Actor--Critic Learning}

We adopt an A2C architecture. The actor outputs Gaussian policies for continuous actions (HAPS motion and power coefficients) and Bernoulli policies for binary associations $\{a_{u,k}\}$ derived from user-dependent logits.
The critic estimates the state value $V(\mathcal{S}_t)$. The TD target and advantage are
\begin{equation}
	\label{eq:TD_target and advantage}
	y_t = r_t + \gamma_{\mathrm{rl}} V(\mathcal{S}_{t+1}), \qquad
	A_t = y_t - V(\mathcal{S}_t),
\end{equation}
where $\gamma_{\mathrm{rl}}$ is the discount factor. Let $\mathbf{a}_t$ denote the joint action at time $t$ (continuous motion/power variables and binary associations), and $\pi(\mathbf{a}_t|\mathcal{S}_t)$ the joint policy. The A2C loss is
\begin{equation}
	\mathcal{L}_{\mathrm{A2C}}
	=
	-\log \pi(\mathbf{a}_t|\mathcal{S}_t)\, A_t
	+ c_v \big(y_t - V(\mathcal{S}_t)\big)^2
	- c_e \mathcal{H}\!\big(\pi(\cdot|\mathcal{S}_t)\big),
\end{equation}
where $\mathcal{H}(\pi(\cdot|\mathcal{S}_t))$ is the policy entropy, and $c_v\ge 0$, $c_e\ge 0$ are the value-loss and entropy-regularization coefficients.

Let $\boldsymbol{\Omega}_t$ denote the decision tuple induced by action $\mathbf{a}_t$ at state $\mathcal{S}_t$. To align training with the constrained problem in~\eqref{eq:problem}, the instantaneous reward is defined as
\begin{align}
	r_t = -&\bigl( \tilde{J}_{\mathrm{l}} + \lambda_{\mathrm{tag}}\bar{J}_{\mathrm{tag}} - \eta_J \mathcal{J}(\mathbf{R}) \bigr) \nonumber\\
	&- \lambda_R \sum_{u\in\mathcal{U}} \bigl[ R_{\min}-R_{\mathrm{p},u} \bigr]^+ 
	- \lambda_\xi \bigl[\xi_{\mathrm{req}}-\xi_{\mathrm{cov}}\bigr]^+ \nonumber\\
	&- \lambda_p \bigl[\hat{p}_{c}(\boldsymbol{\Omega}_t)-\epsilon_c\bigr]^+ 
	- \lambda_{nc} \bigl[\hat{p}_{nc}(\boldsymbol{\Omega}_t)-\epsilon_{nc}\bigr]^+ \nonumber\\
	&- \lambda_U \sum_{k\in\mathcal{K}}\bigl[\sum_{u\in\mathcal{U}}a_{u,k}-U_{\mathrm{perRB}}\bigr]^+ \nonumber\\ 
	&- \lambda_g \bigl( R_{c}^{\mathrm{tot}} - \sum_{g \in \mathcal{G}} w_{c,g} R_{c,g} \bigr),
\end{align}
where $\hat{p}_{c}(\boldsymbol{\Omega}_t)$ and $\hat{p}_{nc}(\boldsymbol{\Omega}_t)$ are the MC-estimated violation probabilities of (C3)--(C4). The non-negative penalty coefficients $\lambda_R$, $\lambda_\xi$, $\lambda_p$, $\lambda_{nc}$, $\lambda_U$, and $\lambda_g$ control enforcement of minimum-rate, coverage, colluding-secrecy, non-colluding-secrecy, and per-HAPS capacity constraints, the priority adjustment in C10, respectively.

The overall objective combines policy and topology-aware losses via $\mathcal{L} = \mathcal{L}_{\mathrm{A2C}} + \lambda_{\mathrm{GA2C}} \mathcal{L}_{\mathrm{GA2C}}$, where $\lambda_{\mathrm{GA2C}}\ge 0$. Parameters are updated via stochastic gradient descent with gradient clipping.

\begin{algorithm}[t]
	\caption{GA2C}
	\label{alg:GA2C_revised}
	\footnotesize
	\begin{algorithmic}[1]
		\Require
		$\gamma_{\mathrm{rl}}$,
		$\lambda_{\mathrm{GA2C}}$,
		$\alpha_{\mathrm{adj}},\alpha_x$,
		$\{\lambda_R,\lambda_\xi,\lambda_p,\lambda_{nc},\lambda_U\}$,
		$N_{\mathrm{mc}}$,
		$T_{\max}$
		\Ensure
		Trained policy $\pi_{\boldsymbol{\omega}}$ and value function $V_{\boldsymbol{\psi}}$
		
		\State Initialize actor $\boldsymbol{\omega}$, critic $\boldsymbol{\psi}$, GA2C parameters $\boldsymbol{\phi}$
		\For{each episode}
		\State Observe initial state $\mathcal{S}_0=(\mathbf{o}_0,\mathcal{G}_0)$
		\For{$t=0$ to $T_{\max}-1$}
		\State Construct normalized adjacency $\hat{\mathbf{A}}_t$ from $\mathcal{G}_t$
		\State Encode graph via GA2C $\rightarrow \mathbf{z}_g$
		\State Sample action $\mathbf{a}_t\sim\pi(\cdot|\mathcal{S}_t,\mathbf{z}_g)$
		\State Execute $\mathbf{a}_t$ and obtain $\boldsymbol{\Omega}_t$
		\State Estimate $\hat p_c(\boldsymbol{\Omega}_t),\hat p_{nc}(\boldsymbol{\Omega}_t)$ using $N_{\mathrm{mc}}$
		\State Compute reward $r_t$
		\State Observe next state $\mathcal{S}_{t+1}$
		\State Compute TD target and advantage in \eqref{eq:TD_target and advantage}
		\State Update $\mathcal{L}_{\mathrm{A2C}}$ and $\mathcal{L}_{\mathrm{GA2C}}$
		\EndFor
		\EndFor
		\State \Return $\boldsymbol{\omega},\boldsymbol{\psi},\boldsymbol{\phi}$
	\end{algorithmic}
\end{algorithm}

\section{Computational Complexity Analysis}
\label{sec:complexity}

In this section, we compare RCEM and GA2C computational complexity. As both evaluate identical physical quantities, the dominant cost is the MC enforcement.
Let $N_{\mathrm{mc}}$ denote the number of MC realizations and 
$\bar K \triangleq \frac{1}{|\mathcal{U}|}\sum_{u\in\mathcal{U}}|\mathcal{K}_u|$ 
be the average number of serving HAPSs per user.

\paragraph{RCEM}

Per iteration, RCEM samples $N_s$ decision vectors of dimension $D$, applies feasibility repair, and performs MC-based secrecy evaluation.
The per-sample physical-layer evaluation cost is $\mathcal{O}\!\big(
	N_{\mathrm{mc}}\,\bar K\,|\mathcal{U}|\,|\mathcal{E}|\,M
	\big)$.
The overall runtime over $N_{\mathrm{iter}}$ iterations is therefore
$\mathcal{O}\!\Big(
	N_{\mathrm{iter}}\,N_s
	\big[
	D
	+
	|\mathcal{K}||\mathcal{U}|\log|\mathcal{U}|
	+
	N_{\mathrm{mc}}\,\bar K\,|\mathcal{U}|\,|\mathcal{E}|\,M
	\big]
	\Big)$.
Using the dominant terms and $\bar K=\mathcal{O}(|\mathcal{K}|)$, the asymptotic scaling becomes $\mathcal{O}\!\big(|\mathcal{U}|\,|\mathcal{K}|\,|\mathcal{E}|\big)$
and reduces to $\mathcal{O}\!\big(|\mathcal{U}|^2|\mathcal{K}|\big)$ if $|\mathcal{E}|\sim|\mathcal{U}|$.

\paragraph{GA2C}

GA2C separates offline training and online inference.
In training, per environment step, the cost includes GNN message passing and MC secrecy evaluation and is derived as
$\mathcal{O}\!\Big(
	L|\mathcal{E}_g|d_\ell^2
	+
	|\mathcal{U}||\mathcal{K}|d_\ell
	+
	N_{\mathrm{mc}}\,\bar K\,|\mathcal{U}|\,|\mathcal{E}|\,M
	\Big)$, where $L$ is the number of GNN layers. The total training cost scales linearly with the number of interaction steps.
In inference, after training, online inference requires only a forward pass and is derived as
$\mathcal{O}\!\Big(
	L|\mathcal{E}_g|d_\ell^2
	+
	|\mathcal{U}||\mathcal{K}|d_\ell
	\Big)$.
Let $|\mathcal{E}_g|$ is the number of edges in the HAPS interaction graph. For dense graphs with $|\mathcal{E}_g|=\Theta(|\mathcal{K}|^2)$ and fixed $(L,d_\ell)$, this reduces to $\mathcal{O}\!\big(|\mathcal{U}||\mathcal{K}|\big)$
in typical NTN where $|\mathcal{U}|\gg|\mathcal{K}|$.

\paragraph{Comparison}
The comparison highlights the different scalability behaviors of RCEM and GA2C. As shown in Table~\ref{tab:complexity_comparison}, RCEM exhibits quadratic growth with respect to $|\mathcal{U}|$, whereas GA2C scales linearly, 
which makes it more suitable for large-scale and dynamic NTN deployments.

\begin{table}[t]
	\centering
	\caption{Comparison of Computational Complexity}
	\label{tab:complexity_comparison}
	\begin{tabular}{l c c}
		\hline
		\textbf{Metric} & \textbf{RCEM} & \textbf{GA2C (Inference)} \\
		\hline
		Complexity order & $\mathcal{O}\!\big(|\mathcal{U}|^2|\mathcal{K}|\big)$ & $\mathcal{O}\!\big(|\mathcal{U}||\mathcal{K}|\big)$ \\
		\hline
	\end{tabular}
\end{table}

\subsection{A Size Bound on the GA2C Graph}

\paragraph*{Online Inference Time.}
Using the per-slot expression, we upper bound the online GA2C compute cost by
\begin{equation}
	C_{\mathrm{inf}}
	\le
	c_1\,L\,|\mathcal{E}_g|\,d_\ell^2
	+
	c_2\,|\mathcal{U}|\,|\mathcal{K}|\,d_\ell
	+
	c_3\,|\mathcal{K}|\,d_\ell^2,
	\label{eq:Cinf_bound}
\end{equation}
where $c_1,c_2,c_3>0$ capture implementation-dependent constants. Let $B_{\mathrm{inf}}$ denote the maximum allowable per-slot inference budget; feasibility requires $C_{\mathrm{inf}}\le B_{\mathrm{inf}}$. We investigate it for both dense and sparse graphs.
For a dense undirected graph, $|\mathcal{E}_g|\le \frac{|\mathcal{K}|(|\mathcal{K}|-1)}{2}\le \frac{|\mathcal{K}|^2}{2}$, and~\eqref{eq:Cinf_bound} implies
$a|\mathcal{K}|^2+b|\mathcal{K}|\le B_{\mathrm{inf}}$,
with
$
a\triangleq \frac{c_1}{2}Ld_\ell^2
$
and
$
b\triangleq c_2|\mathcal{U}|d_\ell+c_3d_\ell^2.
$
From this inequality, we obtain
\begin{equation}
	|\mathcal{K}|
	\le
	\frac{-b+\sqrt{b^2+4aB_{\mathrm{inf}}}}{2a}.
\end{equation}
For a sparse graph, where the interaction graph has bounded degree $k_0$ (e.g., $k_0$-nearest-neighbor construction), then
$
|\mathcal{E}_g|\le \frac{k_0}{2}|\mathcal{K}|
$,
and the budget constraint gives the linear bound
\begin{equation}
	\label{eq:bound_1}
	|\mathcal{K}|
	\le
	\frac{B_{\mathrm{inf}}}{
		\frac{c_1}{2}Lk_0d_\ell^2+c_2|\mathcal{U}|d_\ell+c_3d_\ell^2
	}.
\end{equation}
Hence, under fixed $L$, $d_\ell$, and $|\mathcal{U}|$, the $|\mathcal{K}|$ scales as $\mathcal{O}(\sqrt{B_{\mathrm{inf}}})$ for dense graphs and linearly in $B_{\mathrm{inf}}$ for sparse graphs.

\paragraph*{Memory constraint.}
Let $B_{\mathrm{mem}}$ be the available memory for storing edge weights, and let $b_e$ be the memory footprint per stored edge (including indices/weight, depending on the representation). Requiring $b_e|\mathcal{E}_g|\le B_{\mathrm{mem}}$ yields
\begin{equation}
	\label{eq:bound_2}
	|\mathcal{K}|\le \sqrt{\frac{2B_{\mathrm{mem}}}{b_e}}
	\ \text{(dense)},\qquad
	|\mathcal{K}|\le \frac{2B_{\mathrm{mem}}}{b_e k_0}
	\ \text{(sparse)}.
\end{equation}

Thus, the graph size is jointly limited by $B_{\mathrm{inf}}$ and $B_{\mathrm{mem}}$.

\subsection{An Embedding-Capacity Bound on the Number of HAPSs}

We bound the number of losslessly distinguishable HAPS nodes by the graph embedding $\mathbf{z}_g$. 
With $n_{\mathrm{bits}}$ effective bits per coordinate, the number of distinct embeddings is at most $2^{n_{\mathrm{bits}}d_z}$ (e.g., due to quantization/finite-precision).
Assume further that each undirected edge weight in the interaction graph is stored using $q$ bits. Then, over a graph with $|\mathcal{E}_g|$ edges, the number of distinct quantized adjacency patterns is at most $2^{q|\mathcal{E}_g|}$. A necessary condition for a collision-free encoding over this quantized family is
$q|\mathcal{E}_g|\le n_{\mathrm{bits}}d_z.$
Given $|\mathcal{E}_g|=\frac{1}{2}|\mathcal{K}|(|\mathcal{K}|-1)$, the condition $\frac{q}{2}|\mathcal{K}|(|\mathcal{K}|-1)\le n_{\mathrm{bits}}d_z$ is derived. For fixed $(d_z,n_{\mathrm{bits}},q)$, the maximum $|\mathcal{K}|$ for unique worst-case representation is given by
\begin{equation}
	\label{eq:K_max_bit}
	|\mathcal{K}|
	\le
	\frac{1}{2}\bigr(1+\sqrt{1+\frac{8n_{\mathrm{bits}}d_z}{q}}\bigl).
\end{equation}
Finally, $|\mathcal{K}|$ is bounded by the minimum of~\eqref{eq:bound_1},~\eqref{eq:bound_2} and~\eqref{eq:K_max_bit}.

\section{Simulation Results}
\label{sec:results}

This section evaluates RCEM and GA2C against baselines, examining the impacts of AN, GDP, and multi-HAPS connectivity under identical parameters. Simulation settings are in Table~\ref{tab:sim_params_full}, which specifies default values of $|\mathcal{K}|$, $|\mathcal{U}|$, $|\mathcal{G}|$, and $|\mathcal{E}|$; specific values are noted when these parameters vary.

\begin{table}[t]
	\centering
	\caption{Simulation Parameters}
	\label{tab:sim_params_full}
	\renewcommand{\arraystretch}{0.90}
	\begin{tabular}{l c}
		\hline
		\textbf{Parameter} & \textbf{Value} \\
		\hline
		
		Number of iterations (RCEM) & $50$ \\
		Number of training episodes (GA2C)  & $50$ \\
		Number of test episodes (GA2C) & $10$ \\
		Discount factor ($\gamma$) & $0.98$ \\
		Learning rate ($\eta$) & $3 \times 10^{-4}$ \\
		
		Number of HAPSs ($|\mathcal{K}|$) & $4$ \\
		Number of users ($|\mathcal{U}|$) & $20$ \\
		Number of eavesdroppers ($|\mathcal{E}|$) & $50$ \\
		Number of groups ($|\mathcal{G}|$) & $3$ \\
		Priority of groups $w_{c,g}$ for $g=1,2,3$ & $0.6$, $0.3$, $0.1$ \\
		Horizontal area & $x,y \in [-5000,5000]$ m \\
		HAPS altitude & $z \in [18000,22000]$ m \\
		Minimum distance ($d_{\min}$) & $2000$ m \\
		Maximum velocity ($v_{\max}$) & $50$ m/s \\
		Time step ($\Delta t$) & $1$ s \\
		
		LoS parameters ($a_l,b_l$) & $5.0,\;0.1$ \\
		Elevation threshold ($\theta_0$) & $15^\circ$ \\
		Path-loss constant ($C_0$) & $\left(\frac{0.15}{4\pi}\right)^2$ \\
		Path-loss exponent (LoS/NLoS) & $2.0/3.5$ \\
		Shadowing std. ($\sigma_S$) & $6$ dB \\
		Number of antennas ($M$) & $64$ \\
		Rician factor ($\kappa_r$) & $10$ \\
		Wavelength & $0.15$ m \\
		Antenna spacing & $0.075$ m \\
		
		Transmit power per HAPS ($P_{\mathrm{t},k}$) & $10$ W \\
		Noise power ($\sigma^2$) & $4\times10^{-14}$ \\
		RB capacity ($U_{\mathrm{perRB}}$) & $8$ users \\
		DP parameters ($\epsilon,\delta$) & $50,\;10^{-5}$ \\
		
		\hline
	\end{tabular}
\end{table}

In addition to the parameters summarized in Table~\ref{tab:sim_params_full}, 
the GA2C framework employs a GAE and an A2C architecture with the following configurations. 
The GNN encoder has $L=2$ graph convolutional layers, hidden dimension $d_\ell=64$, graph-level embedding size $d_z=32$, and ReLU activation. The actor is a two-layer MLP with $256$ hidden units and $\tanh(\cdot)$ activation, while the critic is a four-layer MLP with $256$ hidden units and $\tanh$ activation. ReLU in the GCN encoder reduces vanishing gradients, and $\tanh(\cdot)$ in actor–critic keeps outputs bounded for numerical stability and reliable RL performance.
Continuous actions follow a Gaussian policy with fixed standard deviation $\sigma_a=0.5$,, and binary associations use Bernoulli policies.
All networks are trained with Adam at learning rate $3\times10^{-4}$. Value-loss and entropy coefficients are $c_v=0.5$ and $c_e=10^{-3}$, respectively. 
GA2C uses loss weight $\lambda_{\mathrm{GA2C}}=1.0$, adjacency weight $\alpha_{\mathrm{adj}}=1.0$, feature weight $\alpha_x=1.0$, and gradient norm clipping to $1.0$. RCEM employs a diagonal Gaussian sampling with elite size $N_e$ and stability constant $\epsilon_{\mathrm{RCEM}}>0$. RCEM performs per-snapshot evolutionary optimization, but GA2C uses a trained policy for inference. This contrasts iterative search with learned decision mapping.

\subsection{Spreading and Matched Filtering}

We evaluate authentication reliability by simulating tag transmission and detection under GDP.
In the simulation, each group symbol is spread over $L_s$ chips using a unit-modulus QPSK spreading sequence
$
\mathbf{s}_g = [s_g[1],\dots,s_g[L_s]]^{\mathsf T}$, where $|s_g[l]|=1$.
The transmitted chip is
\begin{equation}
	x_g[l] = \frac{c_g}{\sqrt{L_s}}\, s_g[l], 
	\qquad l = 1,\dots,L_s.
\end{equation}
Due to the unit modulus spreading sequence, total symbol energy remains constant while chip-level energy decreases by $1/L_s$, enhancing robustness against noise.
According to the GDP in Section~\ref{sec:system_model}, the same $c_g$ is used across all spreading chips of that symbol duration. The GDP noise variance is given by~\eqref{eq:sigma_DP}, which satisfies the $(\epsilon,\delta)$-GDP guarantee.
For transmissions over multiple intervals, cumulative privacy leakage is bounded via composition  in Theorem~\ref{thm:gdp_composition_app}.

For authentication evaluation, we model the common-stream detection over an equivalent flat-fading link from each serving HAPS $k\in\mathcal{K}_u$.
The effective complex gain is assumed constant over the $L_s$ chips within one spreading block.

Detection is performed in two stages. 
For authentication in the simulation stage, the received chip from HAPS $k$ is modeled as
$
y_{u,k}[l] = h_{u,k} x_g[l] + w_{u,k}[l],
$
where $h_{u,k}$ denotes the equivalent flat-fading channel gain and $w_{u,k}[l] \sim \mathcal{CN}(0,\sigma^2)$ denote i.i.d. chip-level noise samples.

First, maximum-ratio combining (MRC) across serving HAPSs is the SNR-optimal linear receiver under perfect CSI in flat-fading AWGN channels, given by
\begin{equation}
	\hat{x}_u[l]
	=
	\frac{\sum_{k\in\mathcal{K}_u} h_{u,k}^{*} y_{u,k}[l]}
	{\sum_{k\in\mathcal{K}_u} |h_{u,k}|^2}.
\end{equation}

Second, despreading correlates the combined chips with the known spreading sequence to obtain the decision statistic
\begin{equation}
	r_u
	=
	\frac{1}{L_s}
	\sum_{l=1}^{L_s}
	\hat{x}_u[l]\, s_g^{*}[l],
\end{equation}
which is then mapped to the nearest QPSK constellation point to yield the detected authentication symbol $\hat{c}_u$.
AWGN samples are independent across chips, so despreading reduces noise variance by $1/L_s$. GDP is injected once and replicated, so spreading improves robustness against channel noise but does not average out the privacy perturbation.

Finally, the authentication probability is computed as
$P_c = \Pr\{\hat{c}_u = \tilde{c}_g\}$ in the MC simulation,
where $\tilde{c}_g$ denotes the original (non-perturbed) QPSK authentication symbol.

\subsection{Training Dynamics: Evolution versus Learning}

We compare the convergence of RCEM and GA2C. Fig.~\ref{fig:rcem_objective} shows RCEM converges quickly in early iterations. Fig.~\ref{fig:rl_return} shows GA2C's return increases over episodes, confirming effective learning. RCEM searches per snapshot, but GA2C requires offline training and enables fast inference via a reusable policy, making it more suitable for large-scale networks where repeated optimization is costly.

\begin{figure}[t]
	\centering
	\captionsetup[subfloat]{font=footnotesize}
	
	\subfloat[]{
		\includegraphics[width=0.45\columnwidth]{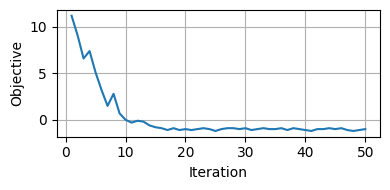}
		\label{fig:rcem_objective}
	}
	\hfill
	\subfloat[]{
		\includegraphics[width=0.45\columnwidth]{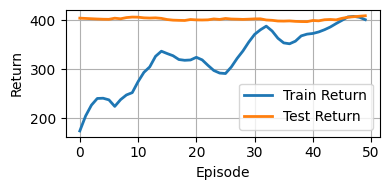}
		\label{fig:rl_return}
	}
	
	\caption{(a) RCEM objective evolution. 
		(b) GA2C return.}
\end{figure}

\subsection{HAPS, Group, and User-Level SSE and Fairness}
\label{subsec:results_haps_group_sse}

Fig.~\ref{fig:sse_haps} summarizes SSE at HAPS and group levels under colluding and non-colluding models. At the HAPS level, sum SSE represents the sum of HAPS SSEs, mean SSE captures the average SSE of HAPSs, and minimum SSE indicates robustness via the worst-performing HAPS. GA2C achieves higher sum, mean, and minimum SSE due to its graph-aware policy. Additionally, RCEM's snapshot-based search may cause load imbalance and lower minimum SSE. Moreover, non-colluding models achieve higher SSE, as expected.
At the group level (Fig.~\ref{fig:sse_group}), SSE reflects tag confidentiality, where sum and mean represent total and average SSE per tag, and minimum SSE denotes group robustness. GA2C outperforms RCEM in minimum group-level SSE by jointly considering topology and user-tag distributions to ensure uniform protection. Furthermore, in Fig.~\ref{fig:sse_semantic}, SSE for each group is shown separately, where SSE is allocated among groups according to priority weights. It is observed that groups with higher priorities achieve higher SSE, which is also consistent with the results in Table~\ref{tab:semantic}. RCEM shows lower errors in G0 (2.07\%) and G1 (2.73\%), but a slightly higher error in G2 (4.20\%). This difference can be attributed to the higher accuracy of the evolutionary algorithm in satisfying the constraints.
Fig.~\ref{fig:sse_user} illustrates per-user SSE. Colluding eavesdroppers reduce SSE compared to the non-colluding case. Importantly, per-user SSE demonstrates that secrecy gains are not limited to aggregate metrics, but also improve individual user performance, supporting the minimum-rate constraint in \eqref{eq:problem}.

\begin{table}[t]
	\centering
	\caption{Error Comparison for RCEM and GA2C Methods}
	\label{tab:semantic}
	\footnotesize
	\renewcommand{\arraystretch}{0.8}
	\begin{tabular}{cccccc}
		\hline
		\textbf{Group} & \textbf{RCEM} & \textbf{GA2C}  & \textbf{Target} & \multicolumn{2}{c}{\textbf{Error (\%)}} \\
		\cline{5-6}
		& & & & \textbf{RCEM} & \textbf{GA2C} \\
		\hline
		G0 & 0.5876 & 0.6276 & 0.6 & 2.07 & 4.60 \\
		G1 & 0.3082 & 0.2954 & 0.3 & 2.73 & 1.53 \\
		G2 & 0.1042 & 0.0770 & 0.1 & 4.20 & 23.00 \\
		\hline
	\end{tabular}
\end{table}

\begin{figure}[t]
	\centering
	\captionsetup[subfloat]{font=footnotesize}
	\subfloat[]{
		\includegraphics[width=0.46\columnwidth]{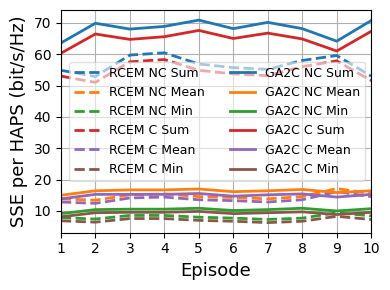}
		\label{fig:sse_haps}
	}
	\hfill
	\subfloat[]{
		\includegraphics[width=0.46\columnwidth]{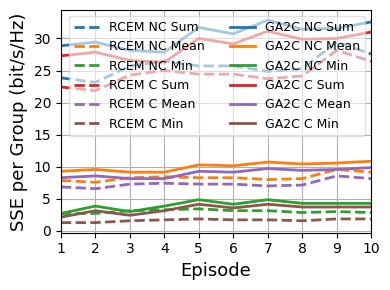}
		\label{fig:sse_group}
	}
	\\
	\subfloat[]{
		\includegraphics[width=0.46\columnwidth]{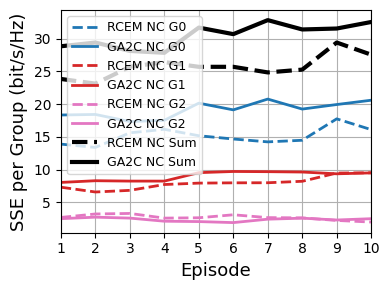}
		\label{fig:sse_semantic}
	}
	\hfill
	\subfloat[]{
		\includegraphics[width=0.46\columnwidth]{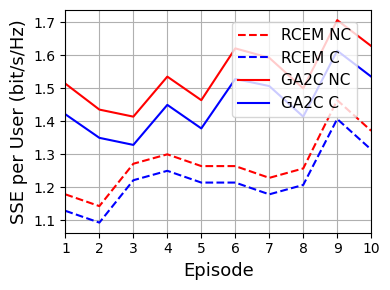}
		\label{fig:sse_user}
	}
	\caption{SSE results: (a) per HAPS; (b) per group (methods); (c) per group (priorities); (d) per user, where colluding and non-colluding are denoted by C and NC in the figure, respectively.}
	\label{fig:results_combined}
\end{figure}

Fig.~\ref{fig:fairness} illustrates $\mathcal{J}(\mathbf{R})$ based on $R_{\mathrm{p},u}$. The curves reflect how each method maintains rate equality.
Fig.~\ref{fig:fairness_vs_haps} shows fairness generally increases with $|\mathcal{K}|$ due to improved spatial diversity, multi-connect transmission, and load balancing. GA2C achieves higher $\mathcal{J}(\mathbf{R})$ in larger networks due to its topology-aware policy. In smaller networks, RCEM can achieve higher fairness by extensively searching the solution space through evolutionary optimization.

\begin{figure}[t]
	\centering
	\captionsetup[subfloat]{font=footnotesize}
	
	\subfloat[]{
		\includegraphics[width=0.45\columnwidth]{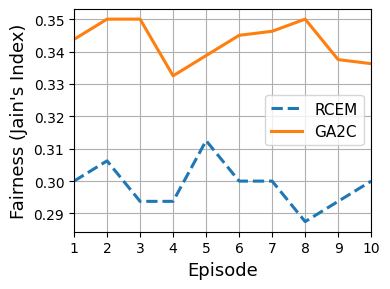}
		\label{fig:fairness}
	}
	\hfill
	\subfloat[]{
		\includegraphics[width=0.45\columnwidth]{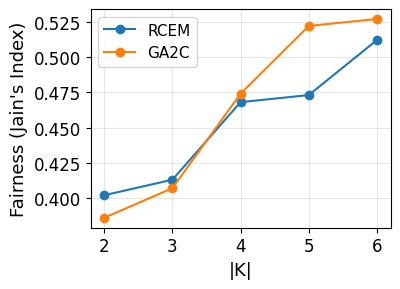}
		\label{fig:fairness_vs_haps}
	}
	
	\caption{Fairness analysis. (a) Fairness evolution over 10 test episodes. 
		(b) Jain’s fairness index versus $|\mathcal{K}|$ with $|\mathcal{U}|=16$, $|\mathcal{G}|=4$, and $|\mathcal{E}|=50$.}
\end{figure}

\subsection{Comparison between Methods and Scalability Analysis}

For fair comparison, both methods were evaluated over 10 identical test episodes. Fig.~\ref{fig:mean10} shows the average SSE across these episodes, comparing the proposed scheme against two baselines: (i) \emph{w/o AN}, where AN is disabled, and (ii) \emph{single}, where users are restricted to single-HAPS connectivity without collaborative transmission.
GA2C consistently outperforms RCEM in larger network configurations due to its ability to generalize with topology awareness. The results clearly highlight the benefits of AN injection and multi-HAPS collaborative RSMA transmission, which significantly enhance secrecy performance compared to the baseline schemes. Table~\ref{tab:relative_gain} shows the multiplicative SSE gains over the baseline methods.
Results confirm AN provides at least a $21\%$ improvement, emphasizing its role in reinforcing secrecy. Collaborative RSMA yields at least a $135\%$ gain over single-connect transmission, highlighting the importance of spatial diversity. Additionally, GA2C outperforms RCEM across configurations, as shown in Fig.~\ref{fig:mean10} and Table~\ref{tab:relative_gain}.
Fig.~\ref{fig:zoom_comp} shows per-user SSE versus baselines, confirming AN's role, collaborative RSMA gains via spatial diversity and coordinated transmission. As shown in Fig.~\ref{fig:3D}, connectivity and HAPS locations differ between methods since positioning is an optimization variable.

\begin{figure}[t]
	\centering
	\includegraphics[width=\columnwidth]{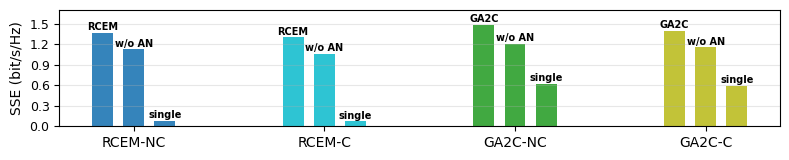}
	\caption{Average SSE over 10 test episodes.}
	\label{fig:mean10}
\end{figure}

\begin{table}[t]
	\centering
	\caption{SSE gain over w/o AN and single-HAPS.}
	\label{tab:relative_gain}
	\footnotesize
	\renewcommand{\arraystretch}{0.95}
	\begin{tabular}{lcccc}
		\hline
		& \multicolumn{2}{c}{RCEM} & \multicolumn{2}{c}{GA2C} \\
		Scenario 
		& w/o AN & single 
		& w/o AN & single \\
		\hline
		Non-Colluded & 1.222 & 17.69 & 1.232 & 2.39 \\
		Colluded     & 1.223 & 17.70 & 1.211 & 2.35 \\
		\hline
	\end{tabular}
\end{table}

\begin{figure}[t]
	\centering
	\captionsetup[subfloat]{font=footnotesize}
	
	\subfloat[RCEM]{
		\includegraphics[width=0.45\columnwidth]{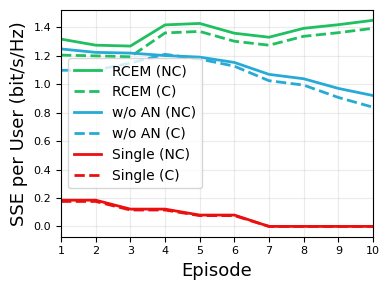}
		\label{fig:rcem_comp}
	}
	\hfill
	\subfloat[GA2C]{
		\includegraphics[width=0.45\columnwidth]{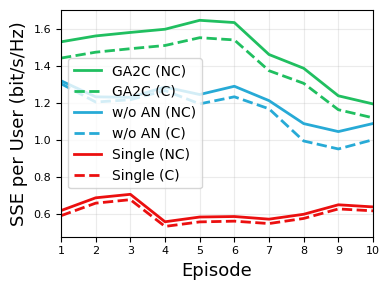}
		\label{fig:ga2c_comp}
	}
	
	\caption{Per-user SSE under colluding and non-colluding eavesdroppers.}
	\label{fig:zoom_comp}
\end{figure}

\begin{figure}[t]
	\centering
	\includegraphics[width=\columnwidth]{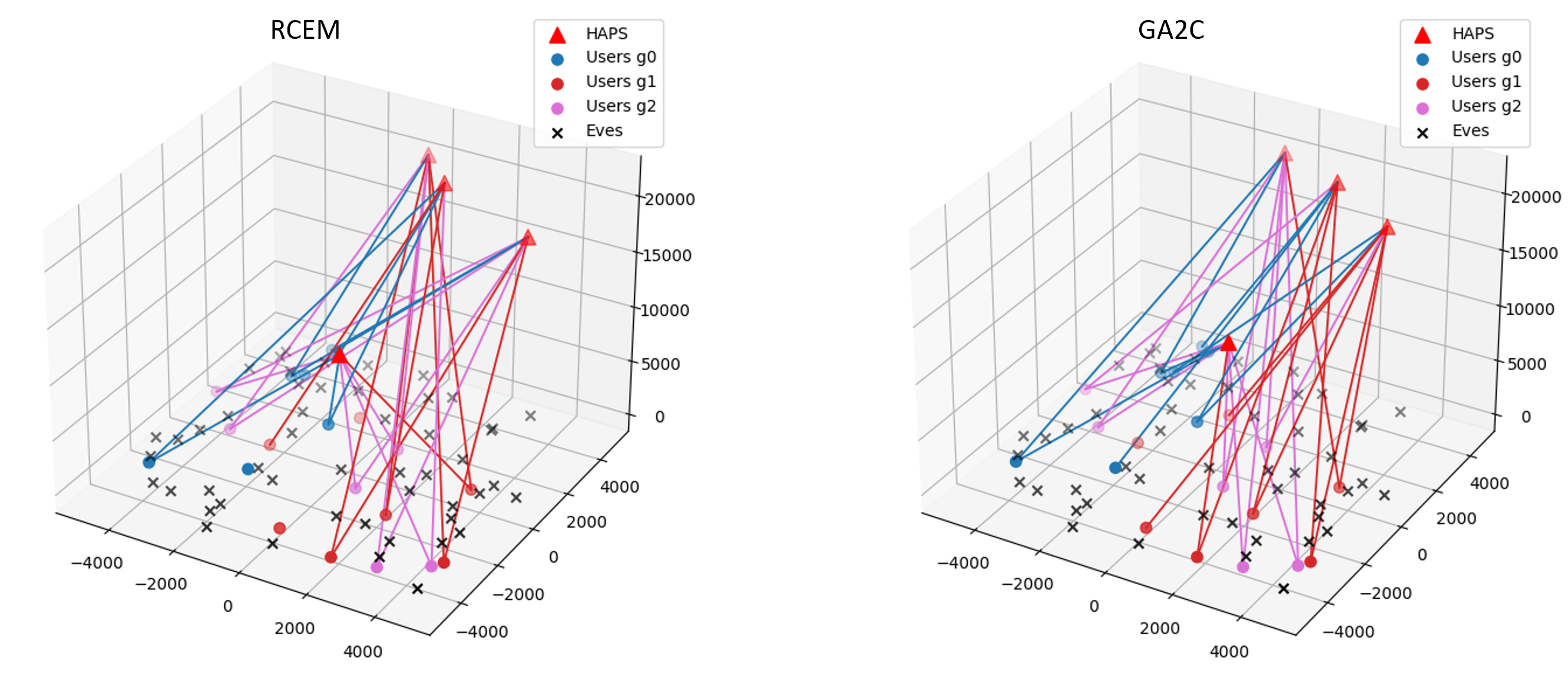}
	\caption{Comparison between the two methods.}
	\label{fig:3D}
\end{figure}

The relative advantage depends on network scale. Fig.~\ref{fig:sse_vs_u} shows SSE versus $|\mathcal{U}|$. As $|\mathcal{U}|$ increases, MUI and limited resources reduce SSE for both methods. However, GA2C maintains better performance in larger networks by capturing user--HAPS relations and generalizing in high-dimensional spaces. In contrast, RCEM becomes less efficient due to the expanding decision space, remaining competitive only in smaller networks.
Fig.~\ref{fig:sse_vs_k} presents SSE versus $|\mathcal{K}|$.  
Increasing $|\mathcal{K}|$ enhances spatial diversity, interference coordination, and multi-connect gains. 
Fig.~\ref{fig:sse_vs_g} indicates that increasing $|\mathcal{G}|$ reduces SSE, since the common-stream power is divided among more group tags. 
Finally, Fig.~\ref{fig:sse_vs_e} shows that increasing  $|\mathcal{E}|$ tightens secrecy constraints and decreases SSE.

\begin{figure}[t] 
	\centering
	\captionsetup[subfloat]{font=footnotesize}
	
	\subfloat[]{
		\includegraphics[width=0.45\columnwidth]{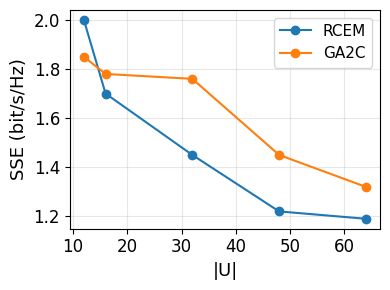}
		\label{fig:sse_vs_u}
	}
	\hfill
	\subfloat[]{
		\includegraphics[width=0.45\columnwidth]{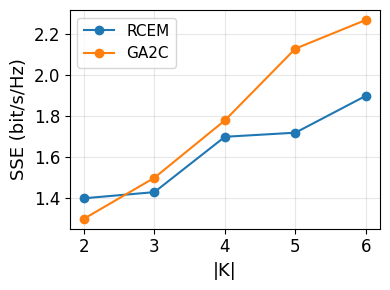}
		\label{fig:sse_vs_k}
	}
	
	\subfloat[]{
		\includegraphics[width=0.45\columnwidth]{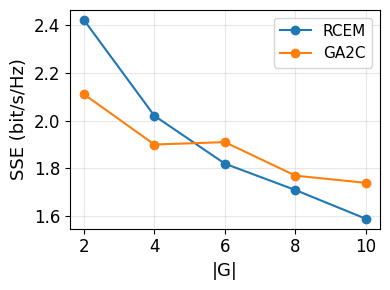}
		\label{fig:sse_vs_g}
	}
	\hfill
	\subfloat[]{
		\includegraphics[width=0.45\columnwidth]{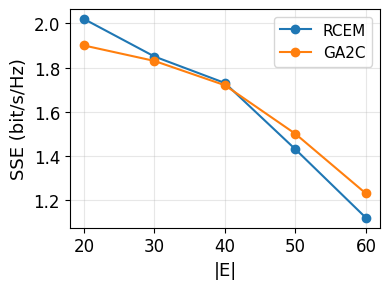}
		\label{fig:sse_vs_e}
	}
	
	\caption{SSE vs. (a) $|\mathcal{U}|$, (b) $|\mathcal{K}|$, (c) $|\mathcal{G}|$, and (d) $|\mathcal{E}|$. 
		For each case: (a) $|\mathcal{K}|=4$, $|\mathcal{G}|=4$, $|\mathcal{E}|=50$; 
		(b) $|\mathcal{U}|=16$, $|\mathcal{G}|=4$, $|\mathcal{E}|=50$; 
		(c) $|\mathcal{U}|=16$, $|\mathcal{K}|=3$, $|\mathcal{E}|=20$; 
		(d) $|\mathcal{U}|=16$, $|\mathcal{K}|=3$, $|\mathcal{G}|=4$.
		For each case, locations are uniformly random, and users are evenly grouped.}
	\label{fig:sse_combined}
\end{figure}

\subsection{Authentication Reliability under GDP}

\begin{figure}[t]
	\centering
	\captionsetup[subfloat]{font=footnotesize}
	
	\subfloat[$P_c$ vs. $\epsilon$ (different $N_g$ and $L_s$)]{
		\includegraphics[width=0.45\columnwidth]{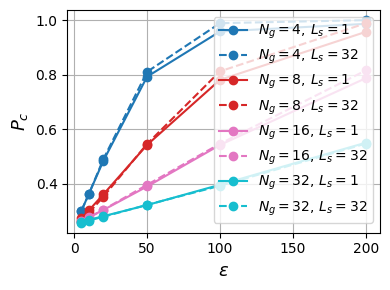}
		\label{fig:pc_vs_eps}
	}
	\hfill
	\subfloat[$P_c$ vs. $N_g$ (different $\epsilon$, $L_s=32$)]{
		\includegraphics[width=0.45\columnwidth]{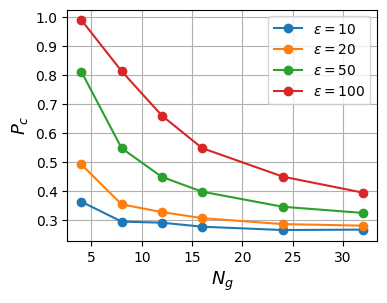}
		\label{fig:pc_vs_ng}
	}
	
	\caption{Authentication reliability under GDP: (a) impact of $\epsilon$, $N_g$, and $L_s$; (b) impact of $N_g$ under different $\epsilon$.}
	\label{fig:pc_combined}
\end{figure}

MC simulations evaluate the authentication probability $P_c$ versus $\epsilon$, $N_g$, and $L_s$.  
As shown in Fig.~\ref{fig:pc_vs_eps}, $P_c$ increases monotonically with $\epsilon$ because larger $\epsilon$ reduces the DP noise variance, improving detection reliability.
Smaller groups (e.g., $N_g=4$) achieve near-perfect detection with moderate $\epsilon$, whereas larger groups require higher privacy budgets since the GDP noise power scales with $N_g^2$.
Increasing $L_s$ enhances robustness against channel noise through despreading gain, but it does not mitigate GDP distortion, as the DP noise is added per symbol and replicated across chips. 
Fig.~\ref{fig:pc_vs_ng} shows that $P_c$ decreases with $N_g$ for fixed $\epsilon$, due to stronger DP perturbation in larger groups. For small $\epsilon$, authentication remains low and nearly insensitive to $N_g$, indicating privacy-dominated operation. For larger $\epsilon$, the group-size effect becomes pronounced. The results correspond to two users from similar groups served by three HAPSs.
These results confirm the GDP privacy--reliability trade-off. Appendix~\ref{app:p_c} provides a closed-form $P_c$ expression and monotonicity proof.

\section{Conclusion}
\label{sec:conclusion}

This paper proposes SAFA-MZ, a distributed and secrecy-aware framework for multi-HAPS NTN systems, where authentication and transmission are jointly designed. A DPLA scheme verifies users from multiple directions, and a group-based tag enables shared authentication within semantic groups. The groups are prioritized according to semantic importance while keeping independent data streams.
A multi-layer collaborative RSMA superposes private, group-common, and AN signals to enhance SSE and fairness. GDP is incorporated to provide privacy guarantees for tag protection. A unified secrecy- and fairness-aware optimization problem jointly considers HAPS placement, user association, RSMA power allocation, and authentication constraints. It is solved via RCEM as a non-learning baseline and GA2C as a graph-aware RL approach. GA2C offers scalability and low-latency online decisions.
Simulations demonstrate significant secrecy gains: GA2C improves SSE by at least $135\%$ over single-connect transmission and about $21\%$ over the without AN scheme, while RCEM also outperforms simplified baselines. In terms of complexity, RCEM scales quadratically with the number of users, whereas GA2C scales linearly, offering superior computational efficiency. Overall, SAFA-MZ provides a scalable and privacy-aware architecture in NTN.

\appendices

\section{Proofs}

\subsection{Gaussian Mechanism}
\label{app:gdp_proofs}
\begin{proof}
	For $D\sim D'$, define the privacy loss
	$
	L(x)=\ln\frac{\Pr[\tilde{s}_c(D)=x]}{\Pr[\tilde{s}_c(D')=x]}.
	$
	For the Gaussian perturbation above, $L(x)$ is an affine function of a Gaussian random variable with mean bounded by
	$\frac{\|s_c(D)-s_c(D')\|_2^2}{2\sigma_{\mathrm{DP}}^2}\le \frac{S_c^2}{2\sigma_{\mathrm{DP}}^2}$
	and variance bounded by $\frac{\|s_c(D)-s_c(D')\|_2^2}{\sigma_{\mathrm{DP}}^2}\le \frac{S_c^2}{\sigma_{\mathrm{DP}}^2}$.
	Applying the standard Gaussian tail bound yields
	$\Pr[L(x)>\epsilon]\le \delta$
	whenever
	$\sigma_{\mathrm{DP}} \ge \frac{S_c\sqrt{2\ln(1.25/\delta)}}{\epsilon}$,
	which proves $(\epsilon,\delta)$-DP.
\end{proof}

\subsection{GDP Scaling}
\label{app:gdp_proofs_scale}
\begin{proof}
	Let $D$ and $D'$ differ in at most $N'$ groups and construct a chain
	$D_0=D, D_1,\ldots,D_{N'}=D'$
	where each adjacent pair differs in exactly one group. Applying $(\epsilon',\delta')$-DP sequentially along the chain and summing the $\delta'$ terms gives
	$\Pr[\mathcal{M}(D)\in S]\le e^{N'\epsilon'}\Pr[\mathcal{M}(D')\in S]+N'\delta'$,
	which is the claim.
\end{proof}

\subsection{Advanced Composition over $T$ Slots}

\label{app:gdp_proofs_time}
\begin{proof}
	The cumulative privacy loss $L_T = \sum_{t=1}^T L_t$ over $T$ independent DP releases is bounded via a Chernoff bound with slack $\bar{\delta}$.
\end{proof}

\subsection{Layout Regularizer}
\label{app:layout_proof}

\begin{proof}
	Expanding the Laplacian \(\mathbf{L} = \mathbf{D} - \mathbf{W}\) and substituting \([\mathbf{D}]_{k,k} = \sum_{j} [\mathbf{W}]_{k,j}\) yields
	$
		J_{\mathrm{l}} = \frac{1}{2} \sum_{k} \sum_{j} [\mathbf{W}]_{k,j} \|\mathbf{q}_k - \mathbf{q}_j\|^2
	$.
	Since \(\mathbf{W}\) increases with the link rate in~\eqref{eq:W}, minimizing \(J_{\mathrm{l}}\) penalizes large distances between strongly connected HAPS.
\end{proof}

\subsection{Permutation Equivariance}
\label{app:perm_priv}

Recall the graph encoder in \eqref{eq:GNN_Enc} that was
$
	\mathbf{H}_{\mathrm{enc},t}^{(\ell+1)}
	=
	\sigma_f\!\left(
	\hat{\mathbf{A}}_t\,
	\mathbf{H}_{\mathrm{enc},t}^{(\ell)}\,
	\mathbf{W}^{(\ell)}
	\right)$, where
	$
	\mathbf{H}_{\mathrm{enc},t}^{(0)}=\mathbf{X}_t
	\label{eq:app_gcn_main_std}$
and $\sigma_f(\cdot)$ is element-wise. Let $\mathbf{\Pi}$ be any permutation matrix and define
$\mathbf{X}'_t=\mathbf{\Pi}\mathbf{X}_t$ and $\hat{\mathbf{A}}'_t=\mathbf{\Pi}\hat{\mathbf{A}}_t\mathbf{\Pi}^{\mathsf{T}}$. Moreover, mean-pooling
	$\mathbf{z}_g=\frac{1}{|\mathcal{K}|}\sum_{k\in\mathcal{K}}\mathbf{H}_{\mathrm{enc},t}^{(L)}(k,:)$
	is permutation-invariant.
	
	Therefore, the encoder in \eqref{eq:GNN_Enc} is permutation-equivariant, meaning
	$
	\mathbf{H}_{\mathrm{enc},t}^{(\ell)}(\mathbf{X}'_t,\hat{\mathbf{A}}'_t)
	=
	\mathbf{\Pi}\,
	\mathbf{H}_{\mathrm{enc},t}^{(\ell)}(\mathbf{X}_t,\hat{\mathbf{A}}_t)
	$, where $\ell\ge 0$.

\begin{proof}
	By induction on $\ell$. The base case holds since $\mathbf{H}_{\mathrm{enc},t}^{(0)}=\mathbf{X}_t$.
	Assume the claim for $\ell$; then
	$\hat{\mathbf{A}}'_t\mathbf{H}'^{(\ell)}=\mathbf{\Pi}\hat{\mathbf{A}}_t\mathbf{H}^{(\ell)}$
	and $\sigma_f(\mathbf{\Pi}\mathbf{Y})=\mathbf{\Pi}\sigma_f(\mathbf{Y})$, yielding the $\ell\!+\!1$ case.
	For mean pooling, invariance follows from $\mathbf{1}^{\mathsf T}\mathbf{\Pi}=\mathbf{1}^{\mathsf T}$.
\end{proof}

\subsection{Privacy--Reliability Trade-Off under GDP}
\label{app:p_c}

We consider $\tilde{c}_g$ is unit-energy QPSK.
After MRC over $\mathcal{K}_u$ and despreading of length $L_s$, the decision statistic is
$
	r_u=\tilde{c}_g+n_{\mathrm{DP}}+n_{\mathrm{ch}}
$
, where $n_{\mathrm{ch}}\sim\mathcal{CN}(0,\sigma_{\mathrm{ch}}^2)$. Post-processing noise variance is given by
$\sigma_{\mathrm{ch}}^2=\frac{\sigma^2}{L_s\sum_{k\in\mathcal{K}_u}|h_{u,k}|^2}$.
Define $\sigma_{\mathrm{tot}}^2 \triangleq \sigma_{\mathrm{DP}}^2+\sigma_{\mathrm{ch}}^2$. Then
$
	P_c=\Pr\{\hat{c}_u=\tilde{c}_g\}
	=
	(1-Q\!(\frac{1}{\sigma_{\mathrm{tot}}}))^2
$,
where $Q(x)=\frac{1}{\sqrt{2\pi}}\int_x^{\infty}e^{-t^2/2}\,dt$.

\begin{lemma}
	\label{lem:pc_mono}
	For any fixed $\sigma_{\mathrm{ch}}^2>0$, $P_c$ is strictly decreasing in $\sigma_{\mathrm{DP}}^2$.
\end{lemma}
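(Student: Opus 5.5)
The plan is to treat $P_c$ as a composition of monotone maps and apply the chain rule. Write
\[
P_c=f(g(\sigma_{\mathrm{DP}}^2)),\qquad f(x)=(1-Q(x))^2,\qquad g(s)=(s+\sigma_{\mathrm{ch}}^2)^{-1/2}.
\]
Here $x=1/\sigma_{\mathrm{tot}}$ and $\sigma_{\mathrm{tot}}^2=\sigma_{\mathrm{DP}}^2+\sigma_{\mathrm{ch}}^2$, which is the closed form derived just before the lemma.

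First, fix $\sigma_{\mathrm{ch}}^2>0$. Then $g$ is well defined and differentiable on $s\ge 0$. Its derivative is $g'(s)=-\tfrac12(s+\sigma_{\mathrm{ch}}^2)^{-3/2}<0$, so $g$ is strictly decreasing. The assumption $\sigma_{\mathrm{ch}}^2>0$ is what keeps $x$ finite at $\sigma_{\mathrm{DP}}^2=0$, so differentiability holds on the whole closed domain.

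Second, show that $f$ is strictly increasing on $(0,\infty)$. Since $Q'(x)=-\frac{1}{\sqrt{2\pi}}e^{-x^2/2}<0$, we get
\[
f'(x)=2(1-Q(x))\,\frac{1}{\sqrt{2\pi}}e^{-x^2/2}.
\]
For $x>0$ we have $Q(x)<1/2$, so $1-Q(x)>1/2>0$, and therefore $f'(x)>0$.

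Finally, combine the two: $\frac{dP_c}{d\sigma_{\mathrm{DP}}^2}=f'(g(s))\,g'(s)<0$. This derivative is strictly negative at every point, so by the mean value theorem $P_c$ is strictly decreasing in $\sigma_{\mathrm{DP}}^2$.

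I do not expect a real obstacle. The only care needed is checking that $1-Q(x)$ stays positive so the square preserves strictness. This holds because $x>0$, and in fact $Q(x)<1$ for all real $x$.

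As a remark tied to \eqref{eq:sigma_DP}, the minimal admissible $\sigma_{\mathrm{DP}}$ decreases in $\epsilon$ and increases in $N_g$. By the lemma, $P_c$ therefore increases with $\epsilon$ and decreases with $N_g$, which matches the simulated trends.
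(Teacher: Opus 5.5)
Your proof is correct and follows the paper's argument: you set $\nu=1/\sigma_{\mathrm{tot}}$, note that $P_c=(1-Q(\nu))^2$ is strictly increasing in $\nu$ because $Q'(\nu)<0$, observe that $\nu$ is strictly decreasing in $\sigma_{\mathrm{DP}}^2$, and conclude by the chain rule. Your explicit check that $1-Q(x)>0$, which is what makes the factor $2(1-Q(\nu))$ positive, is a detail the paper leaves implicit.
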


\begin{proof}
	Let $\nu\triangleq 1/\sigma_{\mathrm{tot}}$, so $P_c=(1-Q(\nu))^2$. It yields that
	$\frac{dP_c}{d\nu}=2(1-Q(\nu))(-Q'(\nu))>0$ since $Q'(\nu)<0$.
	
	Also $\nu=(\sigma_{\mathrm{DP}}^2+\sigma_{\mathrm{ch}}^2)^{-1/2}$ is strictly decreasing in $\sigma_{\mathrm{DP}}^2$,
	hence $\frac{dP_c}{d\sigma_{\mathrm{DP}}^2}<0$ by the chain rule.
\end{proof}

\bibliographystyle{IEEEtran}
\bibliography{refs}

\end{document}